\documentclass[11pt]{article}
\usepackage[utf8]{inputenc}
\usepackage{style}
\usepackage{framed}
\usepackage[parfill]{parskip}
\usepackage{setspace}
\usepackage{hyperref}

\allowdisplaybreaks

\title{A Note on Approximability of Densest At-Least-$k$-Subgraph}

\author{
Bundit Laekhanukit\thanks{Email: \texttt{bundit.laekhanukit@insait.ai}.}\vspace{-0.5em}\\
INSAIT, Sofia University St. Klement Ohridski, Sofia, Bulgaria.
\and
Pasin Manurangsi\thanks{Email: \texttt{pasin@google.com}.}\vspace{-0.5em}\\
Google Research, Bangkok, Thailand.
\and
Ohad Trabelsi\thanks{Email: \texttt{otrabelsi@cs.haifa.ac.il}}\vspace{-0.5em}\\
University of Haifa, Haifa, Israel.}

\begin{document}

\maketitle

\begin{abstract}
We study the {\sc Densest At-Least-$k$-Subgraph (DAL$k$S)} problem, in which we are given an undirected graph $G$ and an integer $k$, and the goal is to find a subgraph of $G$ with at least $k$ vertices with maximum density. The best-known algorithm, independently discovered by Khuller and Saha~\cite{KS09} and by Andersen~\cite{And07}, yields a 2-approximation for {\sc DAL$k$S} in polynomial time.


In this note, we provide a (simple) reduction from {\sc Densest $k$-Subgraph (D$k$S)} to {\sc Densest At-Least-$k$-Subgraph}, which shows that, if {\sc D$k$S} is hard to approximate to within any constant factor, then {\sc DAL$k$S} is hard to approximate to within $(3/2 - \varepsilon)$ factor for every $\varepsilon > 0$. This holds in both the normal (non-parameterized) and the parameterized (by $k$) settings.


We then generalize the reduction to provide a tight $(2 - \varepsilon)$ factor hardness of approximating {\sc Densest At-Least-$k$-Subgraph}, albeit under a stronger hypothesis which roughly states that {\sc Densest $k$-Subgraph} is hard to approximate to within $k^{1 - \delta}$ factor for any constant $\delta > 0$. Once again, this extends naturally to the parameterized setting. Previously, $(2 - \varepsilon)$ factor inapproximability for DAL$k$S was only known under the Small Set Expansion Hypothesis~\cite{Bergner-thesis,Man17-dalks}, which does not apply to the parameterized version of the problem.


Furthermore, we show that the exact version of DAL$k$S is $\W[1]$-hard (parameterized by $k$).
\end{abstract}

\section{Introduction}

We study the {\sc Densest At-Least-$k$-Subgraph (DAL$k$S)} problem, where we are given an undirected graph $G$ and an integer $k$, and the goal is to find a subgraph of $G$ containing at least $k$ vertices with maximum density. Here the density of a graph is defined as the ratio between the number of edges and the number of vertices. This problem was introduced by Andersen and Chellapilla who presented a 3-approximation algorithm for the problem~\cite{AC09}. Later, an improved 2-approximation algorithm was independently discovered by Khuller and Saha~\cite{KS09} and Andersen~\cite{And07}; the former authors also proved that solving the problem exactly is NP-hard. Interestingly, it was then shown independently by Bergner~\cite{Bergner-thesis} and Manurangsi~\cite{Man17-dalks} that, assuming the Small Set Expansion Hypothesis (SSEH)~\cite{RS10}, the problem is hard to approximate to within a factor of $(2 - \varepsilon)$ for any $\varepsilon > 0$, meaning that the aforementioned algorithm is essentially tight for the problem. While such a strong inapproximability result is known under SSEH, no NP-hardness of approximation of the problem is known, even for a factor of 1.001.

This lack of NP-hardness of approximation for {\sc DAL$k$S} is, in fact, not a coincidence. Specifically, this is also the case for a closely related problem, known as {\sc Densest $k$-Subgraph (D$k$S)}, which is the same as {\sc DAL$k$S}, except that the subgraph is asked to be made of \emph{exactly} $k$ vertices instead of \emph{at least} $k$ vertices. {\sc Densest $k$-Subgraph} has been studied long before the introduction of {\sc Densest At-Least-$k$-Subgraph} (see~\cite{FPK01}) and, despite efforts in establishing hardness of approximation for {\sc D$k$S}~\cite{Feige02,Kho06,AAMMW11,BCVGZ12,BKRW17,Man17,ChuzhoyDGT23}, no NP-hardness of approximation for the problem is known, even for a factor of 1.001. Observe that a $\rho$-approximation algorithm for {\sc D$k$S} immediately implies a $\rho$-approximation algorithm for {\sc DAL$k$S}; putting it differently, any hardness of approximation for {\sc DAL$k$S} immediately transfers to a similar hardness of approximation for {\sc D$k$S}. As a result, establishing NP-hardness of approximating {\sc DAL$k$S} is at least as hard as doing so for {\sc D$k$S}, which, as stated earlier, is a challenging open question.

Besides its approximability, another interesting complexity aspect of {\sc DAL$k$S} that has so far been unexplored is its parameterized complexity. A natural parameter for the problem is $k$, the size requirement. Recall that an algorithm is said to be fixed parameter tractable (FPT)\footnote{See e.g. \cite{DowneyF13} for a more detailed introduction on FPT and parameterized complexity.} if it runs in time $f(k) \cdot n^{O(1)}$ for some function $f: \N \to \N$. Prior to this work, no hardness (or algorithmic) result has been shown in this setting. Specifically, it could be that there exists an FPT algorithm for the problem, or that there exists an FPT algorithm that significantly beats the 2-approximation given by the polynomial time algorithm. The latter premise is particularly plausible, in light of the fact that quite a few problems, which are known to be hard under SSEH (in the non-parameterized setting) via ``similarly looking'' reductions, turn out to admit better approximations in FPT time. Examples of such problems include {\sc Minimum $k$-Vertex Cover}\footnote{In {\sc Minimum $k$-Vertex Cover}, we are given an undirected graph $G$ and an integer $k$, and the goal is to determine the minimum number of edges that can be covered by any set of $k$ vertices.} and {\sc Minimum $k$-Cut}\footnote{In {\sc Minimum $k$-Cut}, we are given an undirected graph $G$ and an integer $k$, and the goal is to determine the minimum number of edges to be removed so that the remaining graph consists of at least $k$ connected components.}; similar to {\sc DAL$k$S}, both are SSE-hard to approximate to within factor of $(2 - \varepsilon)$ for any $\varepsilon > 0$~\cite{GandhiK15,Man17-dalks} but both admit FPT approximation schemes~\cite{Marx08,GLL18,Manurangsi19,LokshtanovSS24}. We refer interested readers to the survey~\cite{FeldmannSLM20}, which discusses more results on FPT approximation in detail.

\subsection{Our Results}

Our first result is a ``reverse reduction'' from {\sc D$k$S} to {\sc DAL$k$S}, which shows that, if there is no polynomial time algorithm that approximates {\sc D$k$S} to within any constant factor, then there is no polynomial time algorithm that approximates {\sc DAL$k$S} to within $(3/2 - \varepsilon)$ factor for any $\varepsilon > 0$. The reduction also works in the FPT regime; that is, if we assume further that there is no FPT (parameterized by $k$) algorithm that approximates {\sc D$k$S} to within any constant factor, then there is no FPT algorithm that approximates {\sc DAL$k$S} to within $(3/2 - \varepsilon)$ factor for any $\varepsilon > 0$.

To be more formal, let us introduce a few notations. For any undirected graph $G$, we use $V_G$ and $E_G$ to denote its vertex set and edge set, respectively. For $S \subseteq V_G$, we use $G[S]$ to denote the induced subgraph of $G$ on $S$. Let $\edge_{= k}(G) := \max_{S \subseteq V_G, |S| = k} |E_{G[S]}|$, i.e., the maximum number of edges in any $k$-vertex subgraphs of $G$; $\edge_{\leqs k}(G), \edge_{\geqs k}(G)$ are defined similarly. We use $\den(G)$ to denote the density of $G$, i.e., $\den(G) := |E_G|/|V_G|$. For convenience, we use $\den_{=k}(G)$ to denote $\max_{S \subseteq V_G, |S| = k} \den(G[S])$; $\den_{\leqs k}(G)$ and $\den_{\geqs k}(G)$ are defined in similar manners.

The reductions in the paper will be formalized through gap problems. The starting point of our first result is the gap version of the {\sc Densest $k$-Subgraph} problem, as defined below.

\begin{definition}
For $\lambda \geqs 1$, the {\sc GapD$k$S}$(\lambda)$ problem is: given a graph $G$ and $k, \ell \in \N$, distinguish between
 (Completeness) $\edge_{=k}(G) \geqs \ell$, and,
 (Soundness) $\edge_{=k}(G) < \ell/\lambda$.
\end{definition}

Our first result can then be stated as follows.

\begin{theorem} \label{thm:dks}
Suppose that, for any $\lambda \geqs 1$, no polynomial time algorithm solves {\sc GapD$k$S}$(\lambda)$. Then, for any $\varepsilon > 0$, no polynomial time algorithm approximates {\sc DAL$k$S} to within a factor of $(3/2 - \varepsilon)$.

Moreover, if we assume further that, for any $\lambda \geqs 1$ and $f: \N \to \N$, no $f(k) \cdot n^{O(1)}$ time algorithm solves {\sc GapD$k$S}$(\lambda)$, then, for any $\varepsilon > 0$, no $f(k) \cdot n^{O(1)}$ time algorithm approximates {\sc DAL$k$S} to within a factor of $(3/2 - \varepsilon)$.
\end{theorem}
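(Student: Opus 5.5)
The plan is a gap-preserving reduction. Take a {\sc GapD$k$S}$(\lambda)$ instance $(G,k,\ell)$, with $\lambda$ a large constant depending on $\varepsilon$. Build $G'$ as the disjoint union of $G$ and a clique $K_N$ on $N$ new vertices, and ask for a subgraph with at least $k' := N+k$ vertices. The only parameter is $N$, which I will take to be $N \approx 2\sqrt{\ell}$, so that $\binom{N}{2} \approx 2\ell$. For the parameterized statement I need $k'$ to be bounded by a function of $k$. This holds because $\ell \leqs \binom{k}{2}$ may be assumed (otherwise the instance is trivially a no-instance), so $N = O(k)$ and $k' = O(k)$. The construction is clearly polynomial, so the second part of Theorem~\ref{thm:dks} follows from the same argument as the first.

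\emph{Completeness.} Let $S$ be a $k$-set with $\ell$ edges. Then $K_N \cup S$ has $k'$ vertices and about $N^2/2 + \ell$ edges, so its density is about $(N^2/2+\ell)/(N+k)$.

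\emph{Soundness.} Let $T$ be any set with $|T| \geqs N+k$, and write $a := |T \cap K_N| \leqs N$ and $m := |T \cap V_G| \geqs k + (N-a)$. The edges of $G[T]$ are at most $\binom{a}{2}$ plus the edges of an $m$-subset of $G$. For the second term I use averaging over $k$-subsets: every $m$-set with $e$ edges contains a $k$-set with at least $e\binom{k}{2}/\binom{m}{2}$ edges. Hence $\edge_{=m}(G) \leqs (m/k)^2 \cdot \ell/\lambda \cdot (1+o(1))$.

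When $m$ is close to $k$, the $G$-part contributes only $O(\ell/\lambda)$ edges. The total density is then essentially that of a clique padded with $k$ useless vertices, about $(N^2/2)/(N+k)$. Comparing with the completeness value gives a ratio of about $1 + 2\ell/N^2$. With $N^2 \approx 4\ell$ this is $3/2$, up to the lower-order effect of $k$ against $N$. That effect must be controlled by tuning $N$ relative to $k$ and $\ell$, choosing $N$ to optimize the ratio; that optimization is where the constant $3/2$ should come out.

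\emph{Main obstacle.} The hard case is large $m$, where $T$ takes many vertices of $G$. The averaging bound only gives density about $m\ell/(k^2\lambda)$ for the $G$-part, and this grows with $m$, which is unbounded in terms of $k$. My first attempt is to preprocess $G$. I would compute its maximum density $D^*$ exactly in polynomial time (Goldberg's algorithm) and rescale the construction around it. If $D^*$ is already comparable to the clique density, I would enlarge the clique, or replace it by a regular gadget of prescribed density, so that in both cases large pieces of $G$ cannot beat the padded clique by more than a factor $(1+O(1/\lambda))$. I would also use the fact that the hypothesis holds for every constant $\lambda$: let $\lambda \to \infty$ as $\varepsilon \to 0$ and absorb all $G$-contributions into the error term. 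If this balancing step fails, the fallback is to first blow up $G$, replacing vertices by independent sets and edges by complete bipartite graphs, so that the gadget size can be calibrated against the densest-subgraph value of $G$ rather than against $\ell$.
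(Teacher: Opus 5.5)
\textbf{There is a genuine gap, and it is the ``main obstacle'' you flag yourself.} A disjoint union of $G$ with a clique cannot work, because {\sc GapD$k$S} says nothing about subgraphs of $G$ on many more than $k$ vertices. Take $G(n,p)$ with a constant $p\le 1/(2\lambda)$ and $k\gg\log n$. With high probability it is a no-instance for $\ell=\binom{k}{2}$, yet its density is about $pn/2$, which is unbounded in terms of $k$. A yes-instance can carry the same random background.

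Now compare your intended solution with the padded clique. The clique plus any $k$ vertices is available in every instance, and your intended solution beats it by only $\ell$ edges. So for a ratio of $3/2-\varepsilon$ you need $\binom{N}{2}$ to be at most roughly $2\ell$. That means the clique has density $O(\sqrt{\ell})=O(k)$, and your intended solution has density below $3\ell/k$. Large subgraphs of $G$ like the one above beat both values in yes- and no-instances alike, so the gap disappears.

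None of your repairs changes this:
\begin{itemize}
\item Once $D^*\gg\sqrt{\ell}$, enlarging the clique (or a regular gadget) until it dominates $D^*$ drives $\ell/\binom{N}{2}$, and with it the gap, to $0$. It also makes $k'$ depend on $D^*$ rather than on $k$ alone, which breaks the FPT claim.
\item Blowing up vertices into independent sets and edges into bicliques multiplies every density by the same factor, so the ratio of $D^*$ to $\ell/k$ is unchanged.
\item Letting $\lambda\to\infty$ does not help, since for each fixed $\lambda$ your averaging bound $m\ell/(k^2\lambda)$ still grows with $m$.
\end{itemize}
Likewise, the $3/2$ in your small-$m$ analysis only reflects the choice $N^2\approx 4\ell$. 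In that regime a smaller $N$ gives a larger ratio, so nothing actually derives the constant.

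\textbf{The missing idea is to normalize densities before adding the clique.} The paper (Lemma~\ref{lem:reduction-main} with $t=2$) never places $G$ itself in the output. Instead it uses a blown-up incidence graph of $G$:
\begin{itemize}
\item $c_1=\ell$ copies of each vertex and $c_2=\Theta(k/\varepsilon^2)$ copies of each edge;
\item each vertex-copy is adjacent to all copies of its incident edges.
\end{itemize}
Every edge-copy has degree exactly $2c_1$, so every subgraph of this bipartite part, of any size, has density below $2c_1$. The intended solution ($T\times[c_1]$ together with all copies of $\ell$ edges inside $T$) already has density about $2c_1$, because $c_2\gg k$.

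A disjoint clique on $x\approx\sqrt{2\ell c_1c_2}$ vertices adds about $c_1$ of density at the target size $k'=x+c_1k+c_2\ell\approx c_2\ell$. So completeness is about $3c_1$.

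Soundness splits into two cases.
\begin{itemize}
\item A set of size more than $20k'/\varepsilon$ dilutes the clique to $O(\varepsilon c_1)$, while the bipartite part is capped at $2c_1$.
\item A set of size $O(k'/\varepsilon)$ has only $O(k/\varepsilon^4)$ heavy vertices, meaning vertices with an $\Omega(\varepsilon)$ fraction of their copies present. The strengthened soundness of Proposition~\ref{prop:size-relax} says these heavy vertices span few edges; this is your averaging argument, but needed only up to $\gamma k$ with $\gamma=O(1/\varepsilon^4)$. Hence almost every edge-copy sees at most one heavy endpoint, and the bipartite part has density only about $(1+O(\varepsilon))c_1$.
\end{itemize}
In both cases the total is about $2c_1$. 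The ratio $3c_1/(2c_1)$ is where $3/2$ comes from; in general it is $(2t-1)/t$ for $t$-uniform hypergraphs.
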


Although, as stated earlier, no NP-hardness of approximation of {\sc D$k$S} is known, hardness of approximation under stronger assumptions is known. In particular, {\sc D$k$S} is hard to approximate to within any constant factor assuming the Random 3-SAT (R3SAT) Hypothesis\footnote{The Random 3-SAT Hypothesis~\cite{Feige02} states that no polynomial time algorithm can refute random 3SAT formulae correctly with probability 0.5 without falsely refuting any satisfiable formula.}, the Planted Clique Hypothesis\footnote{The Planted Clique Hypothesis~\cite{Jer92,Kuc95} states that no polynomial time algorithm can distinguish between a graph drawn from $\cG(n, 1/2)$ and one in which a clique of size polynomial in $n$ is additionally planted.}~\cite{AAMMW11}, or the Exponential time Hypothesis (ETH)\footnote{The Exponential Time Hypothesis (ETH)~\cite{IP01,IPZ01} states that no $2^{o(n)}$ time algorithm can solve 3SAT.}~\cite{Man17}; in the parameterized setting, {\sc D$k$S} is known to be hard to approximate to within any constant factor assuming Gap-ETH\footnote{The Gap Exponential Time Hypothesis (Gap-ETH)~\cite{Din16,MR17} states that no $2^{o(n)}$ time algorithm can distinguish between a satisfiable 3-SAT formula and one which is not even $(1 - \gamma)$-satisfiable for some $\gamma > 0$.}~\cite{CCKLMT17}. As a result, the following is an immediate consequence of Theorem~\ref{thm:dks}:

\begin{cor}
Assuming Planted Clique Hypothesis, R3SAT Hypothesis, or ETH, there is no polynomial time $(3/2 - \varepsilon)$-approximation algorithm for {\sc DAL$k$S} for any $\varepsilon > 0$. Moreover, assuming Gap-ETH, for any function $f: \N \to \N$, there is no $f(k) \cdot n^{O(1)}$ time $(3/2 - \varepsilon)$-approximation algorithm for {\sc DAL$k$S} for any $\varepsilon > 0$.
\end{cor}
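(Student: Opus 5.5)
The corollary should follow by composing Theorem~\ref{thm:dks} with known hardness results for {\sc D$k$S}. The plan is to check, hypothesis by hypothesis, that it implies the premise of Theorem~\ref{thm:dks}: for every constant $\lambda \geqs 1$, {\sc GapD$k$S}$(\lambda)$ has no polynomial time algorithm, or, in the parameterized part, no $f(k)\cdot n^{O(1)}$ time algorithm. The conclusion is then read off from the theorem.

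\textbf{ETH.} By~\cite{Man17}, assuming ETH, for every constant $\lambda$ there is no polynomial time algorithm that distinguishes instances with $\edge_{=k}(G) \geqs \ell$ from instances with $\edge_{=k}(G) < \ell/\lambda$. The lower bound there is in fact superpolynomial (almost-polynomial), which is stronger than needed. This is exactly the premise of Theorem~\ref{thm:dks} in its worst-case form.

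\textbf{Gap-ETH.} By~\cite{CCKLMT17}, assuming Gap-ETH, no $f(k)\cdot n^{O(1)}$ time algorithm solves {\sc GapD$k$S}$(\lambda)$ for any constant $\lambda$ and any $f$. This is the premise of the ``moreover'' part of Theorem~\ref{thm:dks}, which then yields the parameterized $(3/2-\varepsilon)$ lower bound.

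\textbf{Planted Clique and R3SAT.} These hypotheses assert hardness of distinguishing two distributions, not worst-case hardness. The hardness of~\cite{AAMMW11} and Feige's~\cite{Feige02} therefore give the following: a polynomial time distinguisher would have to output ``completeness'' on instances drawn from the planted/satisfiable side and ``soundness'' on instances from the random side, and each of these instances satisfies the corresponding gap condition with high probability. Concretely, for Planted Clique a planted $k$-clique gives $\edge_{=k}(G) \geqs \binom{k}{2} = \ell$. In $\cG(n,1/2)$ every $k$-set (for $k = n^{\Omega(1)}$, after the standard boosting in~\cite{AAMMW11}) spans fewer than $\ell/\lambda$ edges with high probability.

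The reduction behind Theorem~\ref{thm:dks} is a deterministic polynomial time map from {\sc GapD$k$S}$(\lambda)$ instances to {\sc DAL$k$S} instances with a gap of $3/2-\varepsilon$. Suppose we had a $(3/2-\varepsilon)$-approximation for {\sc DAL$k$S}. Composing it with this reduction gives an algorithm that answers correctly on every instance meeting the gap promise. It therefore distinguishes the two distributions with probability close to $1$, because instances violating the promise occur with only small probability. For R3SAT we additionally need one-sided correctness: the composed algorithm must never falsely refute a satisfiable formula. This holds because satisfiable formulas map to completeness instances, on which the composed algorithm always answers correctly.

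The only step needing care is this last one: verifying that Theorem~\ref{thm:dks}, stated for worst-case algorithms, transfers to the average-case hypotheses. It is handled by noting that the reduction is instance-wise, so it preserves ``promise satisfied with high probability.'' The remaining steps are direct citations.
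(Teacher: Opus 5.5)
Your proposal is correct and follows the paper, which derives the corollary as an immediate consequence of Theorem~\ref{thm:dks} by citing constant-factor hardness of {\sc D$k$S} under Planted Clique, R3SAT, and ETH (\cite{AAMMW11,Man17}) and FPT constant-factor hardness under Gap-ETH (\cite{CCKLMT17}). Your average-case discussion is harmless but unnecessary, because the cited results already give exactly the worst-case {\sc GapD$k$S}$(\lambda)$ premise that Theorem~\ref{thm:dks} assumes; it is also a bit loose, since the amplified instances of \cite{AAMMW11} are no longer distributed as $\cG(n,1/2)$.
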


Note that the above result is incomparable to the known SSEH-based inapproximability results~\cite{Bergner-thesis,Man17-dalks}, as the assumptions above are not known to follow from SSEH and vice versa.

While the above hardness provides evidence that there is unlikely to be an FPT approximation scheme for the problem, the obvious remaining question is whether it is possible to (significantly) beat the 2-approximation in FPT time. Although we do not know how to rule out such hardness under the previous assumption that {\sc D$k$S} is hard to approximate to within any constant factor, we can show such a result under a stronger assumption which is stated formally below.

\begin{definition}[PolyGapDkS$(\delta, t)$]
For any $\delta \in (0, 1]$ and $t \in \N$, the {\sc PolyGapDkS}$(\delta, t)$ problem is the following: given a graph $G$ and $k \in \N$, distinguish between 
\begin{itemize}
\item (Completeness) there exists $S \subseteq V_G$ of size $k$ s.t. $G[S]$ contains at least $k^{t - \delta}$ copies of $t$-cliques, and,
\item (Soundness) $\den_{\leqs k}(G) < k^\delta$.
\end{itemize}
\end{definition}

\begin{theorem} \label{thm:polydks}
Suppose that, for any $\delta > 0$ and any $t \in \N$, no polynomial time algorithm solves {\sc PolyGapD$k$S}$(\delta, t)$. Then, for any $\varepsilon > 0$, no polynomial time algorithm approximates {\sc DAL$k$S} to within a factor of $(2 - \varepsilon)$.

Moreover, if we assume further that, for any $\delta > 0, t \in \N$ and $f: \N \to \N$, no $f(k) \cdot n^{O(1)}$ time algorithm solves {\sc PolyGapD$k$S}$(\delta, t)$, then, for any $\varepsilon > 0$, no $f(k) \cdot n^{O(1)}$ time algorithm approximates {\sc DAL$k$S} to within a factor of $(2 - \varepsilon)$.
\end{theorem}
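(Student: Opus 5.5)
The plan is a direct gap-preserving reduction from {\sc PolyGapD$k$S}$(\delta,t)$ to {\sc DAL$k$S}. Given $\varepsilon>0$, I would pick $t$ large (roughly $t \geqs 4/\varepsilon$) and then $\delta>0$ small in terms of $t$. From an instance $(G,k)$ I build a bipartite \emph{clique--incidence graph} $H$. Its vertex set is $V_G \cup \mathcal{C}$, where $\mathcal{C}$ is the set of all $t$-cliques of $G$. Each clique-vertex $c\in\mathcal{C}$ is joined to its $t$ member vertices. The size requirement is $k' := k + k^{t-\delta}$. For fixed $t$, $H$ is computable in time $n^{O(t)}$, and $k'$ depends only on $k$, so the reduction is both polynomial and FPT-preserving. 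This handles the parameterized part of Theorem~\ref{thm:polydks} together with the non-parameterized one.

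\textbf{Completeness.} Let $S$ be the $k$-set whose induced graph contains $\geqs k^{t-\delta}$ copies of $K_t$, and let $C_S$ be exactly $k^{t-\delta}$ of these cliques. Then $S\cup C_S$ has $k'$ vertices and $t|C_S|$ edges. Its density is $t\cdot k^{t-\delta}/(k+k^{t-\delta}) = t(1-o(1))$.

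\textbf{Soundness.} Take any $T = A\cup C$ with $A\subseteq V_G$, $C\subseteq\mathcal{C}$ and $|T|\geqs k'$. Since $H$ is bipartite with clique-degrees exactly $t$, we may assume every clique in $C$ lies inside $A$. Then $\den(H[T]) \leqs t|C|/(|A|+|C|)$. Hence it suffices to show that $|C| \leqs (1+o(1))|A|$ whenever $|A|+|C|\geqs k'$, which gives density at most $(t/2)(1+o(1))$. The gap is then $2-O(1/t)-o(1)\geqs 2-\varepsilon$.

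To bound the number of $t$-cliques inside $A$, I would use the soundness hypothesis $\den_{\leqs k}(G)<k^\delta$. There are two cases.
\begin{itemize}
\item If $|A|\leqs k$: every subgraph of $G[A]$ has a vertex of degree $<2k^\delta$. Peeling such vertices one at a time, each removed vertex lies in at most $(2k^\delta)^{t-1}$ cliques of the remaining graph. So $A$ contains at most $|A|(2k^\delta)^{t-1}$ cliques. In this case $|A|+|C|\geqs k'$ forces $|C|\geqs k^{t-\delta}$, and $|C| \leqs k\cdot(2k^\delta)^{t-1}$ is smaller than $k^{t-\delta}$ once $\delta t\ll 1$. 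So this case cannot occur at all.
\item If $|A|>k$: average over a uniformly random $k$-subset $R\subseteq A$. Each clique survives with probability about $(k/|A|)^t$, and $R$ spans at most $k(2k^\delta)^{t-1}$ cliques. This gives $|C| \lesssim |A|^t k^{1-t+O(\delta t)}$.
\end{itemize}

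\textbf{Main obstacle.} The hard case is $|A|>k$. The averaging bound above is only useful for $|A|$ near $k$; for $|A|\gg k$ it permits far more than $|A|$ cliques, because the hypothesis says nothing about density on sets larger than $k$. I would first try to repair this in the construction rather than the analysis.
\begin{itemize}
\item One option is to only include clique-vertices for cliques inside a fixed family of $k$-sized pieces. For example, take a random partition or coloring of $V_G$ into blocks, or restrict to cliques that are ``locally'' supported, so that every clique counted in $T$ lives in a set of at most $k$ original vertices. Here the peeling bound applies, while the planted set $S$ survives with probability $f(k)^{-1}$, which is fine for FPT and can be derandomized with splitters or perfect hash families.
\item Alternatively, I would add a padding gadget: many disjoint copies of a graph of density about $t/2$. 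This makes a large $A$ no better than padding, and the computation reduces to checking $(e+dN)/(m+N)$ against the two regimes.
\end{itemize}
I expect getting this large-$A$ case airtight, while keeping $k'$ a function of $k$ alone, to be the crux of the proof.
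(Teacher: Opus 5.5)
\textbf{There is a genuine gap, and it is in the construction itself, not only in the case $|A|>k$.} In your $H$ every clique-vertex has degree $t$, so every subgraph of $H$ has density less than $t$, and your completeness value is $t(1-o(1))$. {\sc DAL$k$S} only imposes a lower bound on the solution size, so $H$ itself is always feasible. The hypothesis $\den_{\leqs k}(G)<k^{\delta}$ gives no control over the \emph{global} number of $t$-cliques.

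For instance, take $G(n,p)$ with $p=n^{-1/2}$ and $k=\lfloor n^{1/4}\rfloor$. With high probability $\den_{\leqs k}(G)=O(1)<k^{\delta}$, yet $G$ contains $\Theta(n^{3/2})\gg n$ triangles. For $t=3$ your reduction sends this sound instance to one of value $\den(H)=3-o(1)$, so the gap collapses to $1+o(1)$. In particular, the target $|C|\leqs(1+o(1))|A|$ is false.

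Neither repair helps.
\begin{itemize}
\item Padding with components of density $t/2$ cannot lower the value of $H$, which stays feasible (topped up with padding if needed).
\item Restricting clique-vertices to blocks of at most $k$ vertices fails for three reasons. First, a block family guaranteed to contain the unknown $S$ must essentially list all $\binom{n}{k}$ candidates. A random partition keeps $S$ inside one block only with probability about $(k/n)^{k-1}$. Splitters and perfect hash families give colour classes of size about $n/k$, not blocks of size $k$. Second, an $f(k)$ overhead is not even permitted in the polynomial-time half of the theorem, where $k$ can be $n^{\Omega(1)}$. Third, a $k$-vertex block of density below $k^{\delta}$ can still contain $k^{1+\Theta(\delta t)}\gg k$ copies of $K_t$ (take disjoint cliques of size about $2k^{\delta}$). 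So a union of blocks again has density close to $t$.
\end{itemize}

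What is missing is a mechanism that lets the completeness solution exceed the density ceiling of the incidence graph while penalising large solutions.

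The paper (Lemma~\ref{lem:reduction-main}) takes the disjoint union of an $x$-clique with a replicated incidence graph: $c_1$ copies of each element, $c_2$ copies of each hyperedge, and $x\approx\sqrt{2(t-1)\ell c_1c_2}$. The intended solution, the clique plus a dense $k$-subhypergraph, has density about $(2t-1)c_1$. In the sound case, very large sets dilute the clique and are capped by the ceiling $tc_1$. Moderately sized sets collect about $(t-1)c_1$ from the clique but only about $c_1$ from the incidence part.

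That last bound relies on a strong soundness property: for every set $T$ of at most $\gamma k$ elements, few hyperedges meet $T$ in two or more elements. The paper obtains it (Lemma~\ref{lem:polydks-to-dksh}) by taking $(t-1)$-cliques as elements and $t$-cliques as hyperedges. It then double counts pairs of $(t-1)$-cliques through their common $(t-2)$-clique. This needs the density bound on vertex sets of size up to $\gamma k^{t}$. That bound is secured beforehand by rescaling $k$ to $\lfloor (k/\gamma)^{1/t}\rfloor$, at the price of starting from $\delta/(2t)$ instead of $\delta$ (Proposition~\ref{prop:size-relax-poly}).

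The resulting gap is $(2t-1)/t=2-1/t$, and choosing $t\approx 2/\varepsilon$ gives $2-\varepsilon$.
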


We remark that the completeness in {\sc PolyGapD$k$S}$(\delta, t)$ is indeed stronger than just requiring that $G[S]$ is dense. In particular, there are dense graphs that contain few copies of $t$-clique; for example, a complete bipartite graph does not contain $t$-clique for any $t \ge 3$. Similarly, the soundness is also stronger since it requires the density bound even for subgraphs of size less than $k$.
Nevertheless, it turns out that some hardness results in the literature do satisfy these stronger conditions. In particular, Manurangsi et al.~\cite{ManurangsiRS21} proved, under the Strongish Planted Clique Hypothesis\footnote{The Strongish Planted Clique Hypothesis~\cite{ManurangsiRS21} states that no $n^{o(\log n)}$ time algorithm can distinguish between a graph drawn from $\cG(n, 1/2)$ and one in which a clique of size polynomial in $n$ is additionally planted.}, a $o(k)$-hardness for {\sc D$k$S} that holds even with ``perfect completeness'' where there exists $S \subseteq V_G$ such that $G[S]$ is a $k$-clique. This obviously implies that $G[S]$ contains $k^{t-o(1)}$ copies of $t$-clique. Their soundness also holds for all subgraphs of size at most $k$ (rather than just equal to $k$). Moreover, their hardness result applies even against FPT algorithms. Thus, Theorem~\ref{thm:polydks} yields the following corollary.

\begin{cor}
Assuming Strongish Planted Clique Hypothesis, for any function $f: \N \to \N$, there is no $f(k) \cdot n^{O(1)}$ time $(2 - \varepsilon)$-approximation algorithm for {\sc DAL$k$S} for any $\varepsilon > 0$.
\end{cor}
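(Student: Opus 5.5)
The plan is to apply the second (parameterized) part of Theorem~\ref{thm:polydks}. For that we need the following: under the Strongish Planted Clique Hypothesis, for every fixed $\delta > 0$, every $t \in \N$ and every $f: \N \to \N$, no $f(k) \cdot n^{O(1)}$ time algorithm solves {\sc PolyGapD$k$S}$(\delta, t)$. Once this holds, Theorem~\ref{thm:polydks} rules out $f(k)\cdot n^{O(1)}$ time $(2-\varepsilon)$-approximation of {\sc DAL$k$S} for every $\varepsilon > 0$, which is exactly the corollary.

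To get this, I would start from the hardness result of Manurangsi et al.~\cite{ManurangsiRS21}. Under the Strongish Planted Clique Hypothesis, it gives a family of instances $(G,k)$ for which no $f(k)\cdot n^{O(1)}$ time algorithm can distinguish two cases:
\begin{itemize}
\item (YES) $G$ contains a $k$-clique;
\item (NO) every $S \subseteq V_G$ with $|S| \leqs k$ has density at most $k^{o(1)}$, i.e.\ $|E_{G[S]}| \leqs |S| \cdot k^{o(1)}$. This is the ``$o(k)$-hardness with soundness for all subgraphs of size at most $k$'' mentioned above.
\end{itemize}
The point to verify carefully is the quantitative form of the NO case. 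The $o(k)$ loss must translate into $\den_{\leqs k}(G) < k^{\delta}$ for every fixed $\delta > 0$ once $k$ exceeds some threshold $k_0(\delta)$. If their bound is stated as $|E_{G[S]}| \leqs |S|^2/g(k)$ rather than directly as a density bound, I would re-derive the $k^{o(1)}$ density bound from their analysis, which bounds the densest subgraph of each size in a random-like graph. I expect this to be the main obstacle, because it depends on the exact form of their theorem rather than on anything proved here.

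For the YES case, a $k$-clique $G[S]$ contains $\binom{k}{t} \geqs (k/t)^t$ copies of $K_t$. This is at least $k^{t-\delta}$ as soon as $k^{\delta} \geqs t^t$, i.e.\ for $k \geqs t^{t/\delta}$. So both the completeness and the soundness conditions of {\sc PolyGapD$k$S}$(\delta,t)$ hold whenever $k$ is at least a threshold depending only on $\delta$ and $t$.

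Instances with $k$ below the threshold are handled directly. On them, a $k$-clique can be found or ruled out by brute force in time $n^{O(k_0)}$. Since $\delta$ and $t$ are constants, this is polynomial, and in particular of the form $f(k)\cdot n^{O(1)}$. So any $f(k)\cdot n^{O(1)}$ algorithm for {\sc PolyGapD$k$S}$(\delta,t)$ would give one for the instances of Manurangsi et al., contradicting the hypothesis. Invoking Theorem~\ref{thm:polydks} then completes the proof.
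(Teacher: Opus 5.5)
Your proposal is correct and is essentially the paper's argument: the paper likewise applies the FPT part of Theorem~\ref{thm:polydks} to the SPCH-based hardness of Manurangsi et al.~\cite{ManurangsiRS21}. It uses that a $k$-clique contains $k^{t-o(1)}$ copies of $K_t$, and that their soundness already bounds the density of all subgraphs on at most $k$ vertices; this is the point you flag as needing verification, and the paper simply takes it from their result. Your explicit threshold $k \geqs t^{t/\delta}$ and the brute-force handling of small $k$ just fill in details the paper leaves implicit.
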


Finally, we also adapt our reduction slightly to show that {\sc Densest At-Least-$k$-Subgraph} parameterized by $k$ is \W[1]-hard to solve exactly.

\begin{theorem} \label{thm:w1}
{\sc Densest At-Least-$k$-Subgraph} is \W[1]-hard when parameterized by $k$.
\end{theorem}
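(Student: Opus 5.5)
We would reduce from {\sc $k$-Clique}, which is \W[1]-hard parameterized by $k$, via a parameterized reduction that outputs an instance whose size requirement is a function of $k$ alone. Given a graph $G$ and $k$, let $I$ be the \emph{incidence graph} (full subdivision) of $G$. It has a node for each vertex of $G$ and a node for each edge $uv$ of $G$, with the edge-node adjacent to $u$ and $v$. Let $H$ be the disjoint union of $I$ and a clique $C$ on $m := k^4$ vertices. Set $q := k + \binom{k}{2}$ and $k' := m + q$, and define the target density
\[
\tau := \frac{\binom{m}{2} + k(k-1)}{m+q}.
\]
The claim is that $\den_{\geqs k'}(H) \geqs \tau$ if and only if $G$ has a $k$-clique. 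Since $k'$ depends only on $k$, this gives \W[1]-hardness of the exact problem.

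\textbf{Key counting step.} For a set $T \subseteq V_I$ containing $a$ vertex-nodes and $b$ edge-nodes, every edge of $I[T]$ is incident to an edge-node, so $|E_{I[T]}| \leqs 2b$. Moreover, $b \leqs \binom{a}{2}$ may be assumed, since an edge-node whose endpoints are not both in $T$ contributes at most one edge; such nodes are handled by the same bound after a slight weakening, e.g.\ $|E_{I[T]}| \leqs 2b' + b''$. Writing $|T| = q'$, we get $|E_{I[T]}| \leqs 2(q' - a)$, where $a$ must satisfy $q' - a \leqs \binom{a}{2}$.
\begin{itemize}
\item For $q' \geqs q$ this forces $a \geqs k$, hence $|E_{I[T]}| \leqs 2(q'-k) = k(k-1) + 2(q'-q)$.
\item When $q' = q$, equality requires $a = k$ and $b = \binom{k}{2}$ with all those edges present, i.e.\ a $k$-clique in $G$.
\end{itemize}
Completeness is then immediate: $C$ together with the subdivided clique has exactly $k'$ vertices and density $\tau$.

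\textbf{Soundness.} Take any $S \subseteq V_H$ with $|S| \geqs k'$, and write $S = S_C \cup T$ with $S_C \subseteq C$ and $T \subseteq V_I$.
\begin{itemize}
\item \emph{Reduction to $S_C = C$.} If $S_C \neq C$, swapping a vertex of $T$ for a missing clique vertex does not decrease the edge count. A vertex of $C$ has $|S_C| \geqs$ roughly $m - q$ neighbours, which is far more than a vertex of $T$ can lose; here one should compare against the degree within $I[T]$, which may be large only for vertex-nodes, so the cleaner route is to do the whole swap at once and use $|E_{I[T]}| \leqs 2|T|$. So we may assume $S_C = C$.
\item \emph{The case $S_C = C$.} With $|T| = q + r$ and $r \geqs 0$, the counting step gives
\[
\den(H[S]) \leqs \frac{\binom{m}{2} + k(k-1) + 2r}{m + q + r}.
\]
Since $\tau \approx m/2 > 2$, this is at most $\tau$, with equality only if $r = 0$ and $T$ is a subdivided $k$-clique.
\end{itemize}
Hence if $G$ has no $k$-clique, every feasible $S$ has density strictly below $\tau$.

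The main obstacle is making the swap argument fully rigorous, i.e.\ showing that an optimal set may be assumed to contain all of $C$, while accounting for vertex-nodes of large degree in $I$. The plan is to bound globally rather than vertex by vertex. If $|S_C| = m - j$, then $|T| \geqs q + j$ and $|E_{H[S]}| \leqs \binom{m-j}{2} + 2|T|$. Because $m = k^4$ is much larger than $q$, the loss of at least $j(m-j) \geqs j(m - q - 1)$ clique edges dominates the gain of at most $2|T|$, with slack that we must check explicitly. If the crude estimate is tight somewhere, we would instead enlarge $m$, e.g.\ to $k^{10}$, and redo the estimate.
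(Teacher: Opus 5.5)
\textbf{There is a genuine gap, in the step you yourself flag as the main obstacle.} Your construction, completeness argument, counting step and the case $S_C=C$ are fine and essentially match the paper. The paper uses a clique on only $x=k$ vertices with $k\geq 10$; the clique size matters only through $\tau>2$.

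The gap is the reduction to $S_C=C$, and the argument you sketch for it fails as stated.
\begin{itemize}
\item You compare raw edge counts and claim that a clique loss of at least $j(m-j)\geq j(m-q-1)$ edges dominates a gain of at most $2|T|$. But $|T|$ is not bounded by any function of $k$: a feasible $S$ may contain almost the whole incidence graph. So no choice $m=m(k)$, in particular not $k^{10}$, makes this comparison true. Edge counts alone cannot settle the case, because $|S|$ may exceed $k'$ by an unbounded amount.
\item The inequality $j(m-j)\geq j(m-q-1)$ needs $j\leq q+1$, and your lower bound $j(m-j)$ vanishes at $j=m$.
\item The remark that a new clique vertex has roughly $m-q$ neighbours in $S_C$ fails when $S_C$ is small, e.g.\ $S_C=\emptyset$.
\end{itemize}

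\textbf{How to repair it.} Charge the surplus of $T$ to $\tau>2$ rather than to the clique, exactly as you already do in the case $S_C=C$. Your own estimate $|E_{H[S]}|\leq\binom{m-j}{2}+2|T|$ then suffices, provided you compare it with $\tau|S|=\tau(m-j+|T|)$. Use $\tau>2$, $|T|\geq q+j$, $\tau(m+q)=\binom{m}{2}+k(k-1)$ and $2q=2k+k(k-1)$:
\[
\tau|S|-|E_{H[S]}|\geq \tau(m-j)-\binom{m-j}{2}+(\tau-2)(q+j)=\binom{m}{2}-\binom{m-j}{2}-2k-2j\geq \frac{j(m-5)}{2}-2k>0
\]
for $j\geq 1$ and $m=k^4$. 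Hence $\den(H[S])<\tau$.

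A vertex-by-vertex swap also works, provided you always remove a minimum-degree vertex of the current $I[T]$. Such a vertex has degree at most $2$: an edge-node has degree at most $2$, and without edge-nodes $I[T]$ is edgeless. The total loss is then at most $2j$, not $2|T|$.

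\textbf{How the paper avoids this.} It never swaps. It takes an optimal $S'$ and \emph{adds} the missing clique vertices, which trivially preserves $|S'|\geq k'$. Letting $y$ be the number of clique vertices already in $S'$, the marginal density $(x+y-1)/2$ of the added vertices exceeds $\max\{(y-1)/2,2\}\geq\den(G'[S'])$, contradicting optimality. It then obtains $|S'|=k'$ by deleting a vertex of $S'\cap V_H$ of degree at most $2$, using $\alpha>2$. As a result it only needs your counting at $|T|=q$ exactly.
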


We remark here that {\sc Densest At-Least-$k$-Subgraph} is in XP\footnote{XP is the class of parameterized problems that can be solved in time $n^{f(k)}$ for some function $f: \N \to \N$, where $n$ is the input size and $k$ is the parameter.}. Note that it is not a priori clear that this is the case; unlike in {\sc Densest $k$-Subgraph}, the solution(s) to {\sc Densest At-Least-$k$-Subgraph} can have large size and hence we may need to enumerate all $2^n$ subsets of vertices to find the optimal solution. Fortunately, Saha \etal~\cite{SHKRZ10} showed that, given a subset $S \subseteq V_G$, it is possible to efficiently find the densest subgraph containing $S$. As a result, we can simply enumerate all subsets $S$'s of size $k$, use Saha \etal's algorithm and output the densest subgraph among these subgraphs; this yields an $n^{k + O(1)}$ time algorithm for {\sc DAL$k$S}.

\section{Preliminaries}

For every $n \in \N$, we use $[n]$ to denote $\{1, \dots, n\}$. 
For any undirected graph $G$, we use $\deg_G(v)$ to denote the degree of $v$ (with respect to $G$) for every vertex $v \in V_G$. 
In the context of parameterized complexity, we consistently use ``$k$'' as our parameter in all problems.
We recall that a reduction is FPT if the parameter of the output instance is bounded by a function of the input parameter, independent of the total instance size.

\subsection{Massaging the Hypotheses}

It will be more convenient for us to use hypotheses that look seemingly stronger than those originally stated ({\sc GapD$k$S}$(\lambda)$ and {\sc PolyGapDkS}$(\delta, t)$) but are actually equivalent to the original hypotheses. We start with a variant of {\sc GapD$k$S} which now takes in two parameters $\lambda, \gamma$ and, in the soundness case, requires a (seemingly) stronger condition that any subgraph of size $\gamma \cdot k$ still contains few edges, as stated below.

\begin{definition}
For any $\lambda, \gamma \geqs 1$, the {\sc GapDkS}$(\lambda, \gamma)$ problem is: given a graph $G$ and $k, \ell \in \N$, distinguish between (Completeness) $\edge_{\leqs k}(G) \geqs \ell$ and (Soundness) $\edge_{\leqs \gamma \cdot k}(G) < \ell/\lambda$.
\end{definition}

It turns out that this additional requirement in the soundness case does not change the hypothesis, as there is a reduction from the original to this variant:

\begin{prop} \label{prop:size-relax}
For any $\lambda, \gamma \geqs 1$, there is an FPT polynomial time reduction from {\sc GapDkS}$(2\lambda\gamma^2)$ to {\sc GapDkS}$(\lambda, \gamma)$.
\end{prop}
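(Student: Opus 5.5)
The reduction I would use is the identity map: output the same instance $(G, k, \ell)$ unchanged. This is trivially polynomial time, and it is FPT because the parameter $k$ does not change. So everything comes down to checking completeness and soundness. I may assume $|V_G| \geqs k$; otherwise $\edge_{=k}(G)$ is a maximum over an empty family, and I would handle that degenerate case by mapping to a fixed trivial instance. I may also assume $k \geqs 2$, since for $k = 1$ we have $\edge_{=1}(G) = 0 < \ell$ whenever $\ell \geqs 1$, so the instance can be mapped to a fixed NO instance.

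\emph{Completeness.} Suppose $\edge_{=k}(G) \geqs \ell$. Then $\edge_{\leqs k}(G) \geqs \edge_{=k}(G) \geqs \ell$, since a set of size exactly $k$ is in particular a set of size at most $k$.

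\emph{Soundness.} Suppose $\edge_{=k}(G) < \ell/(2\lambda\gamma^2)$. Take any $T \subseteq V_G$ with $m := |T| \leqs \gamma k$, and let $e := |E_{G[T]}|$. I will show $e \leqs 2\gamma^2 \cdot \edge_{=k}(G)$; this gives $\edge_{\leqs \gamma k}(G) < \ell/\lambda$, as required. There are two cases.

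\begin{itemize}
\item If $m \leqs k$, pad $T$ with arbitrary extra vertices to a set of size exactly $k$. Adding vertices does not remove edges, so $e \leqs \edge_{=k}(G)$.
\item If $m > k$, use an averaging argument. Pick a uniformly random $k$-subset $S \subseteq T$. Each edge of $G[T]$ survives in $G[S]$ with probability $\frac{k(k-1)}{m(m-1)}$, so
\[
\edge_{=k}(G) \geqs \mathbb{E}\bigl[|E_{G[S]}|\bigr] = e\cdot\frac{k(k-1)}{m(m-1)} \geqs e\cdot\frac{k(k-1)}{\gamma k\cdot\gamma k} = \frac{e}{\gamma^2}\cdot\frac{k-1}{k} \geqs \frac{e}{2\gamma^2},
\]
where the last inequality uses $k \geqs 2$.
\end{itemize}

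The only delicate point is this last averaging step. The naive ratio $(k-1)/(\gamma k - 1)$ is slightly below $1/\gamma$, so one does not get exactly $1/\gamma^2$. The factor $2$ in the hypothesis $2\lambda\gamma^2$ is exactly the slack that absorbs the $(k-1)/k \geqs 1/2$ loss. A related subtlety is that $\gamma k$ need not be an integer, but the bound $m \leqs \gamma k$ is all the argument uses.
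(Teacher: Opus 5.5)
Your proposal is correct and is essentially the paper's proof: the identity reduction, padding when $|T| \le k$, and averaging over a uniformly random $k$-subset when $k < |T| \le \gamma k$, which loses a factor of $2\gamma^2$. Your explicit reduction to the case $k \ge 2$ makes precise a point the paper leaves implicit. The paper's estimate $\frac{k(k-1)}{(\gamma k)(\gamma k - 1)} \ge \frac{1}{2\gamma^2}$ also requires $k \ge 2$ when $\gamma > 1$.
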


\begin{proof}
The reduction is the trivial reduction; that is, given an input $(G, k, \ell)$ for {\sc GapDkS}$(2\lambda\gamma^2)$, it simply outputs $(G, k, \ell)$. It is obvious that this is an FPT reduction. Moreover, since we retain the same graph, the completeness follows trivially. To see the soundness of the reduction, suppose contrapositively that
$\edge_{\leqs \gamma \cdot k}(G) \geqs \ell/\lambda$. In other words, there
exists $S \subseteq V_G$ of size $r \leqs \gamma k$ such that
$|E_{G[S]}| \geqs \ell/\lambda$. If $r < k$, then we may add arbitrary vertices
to $S$ until it has size $k$, which only increases the number of induced edges;
hence, $\edge_{=k}(G) \geqs \ell/\lambda \geqs \ell/(2\lambda\gamma^2)$.
Otherwise, $k \leqs r \leqs \gamma k$. Let $S'$ be a random $k$-element subset
of $S$; the probability that each edge in $E_{G[S]}$ remains in $E[S']$ is
\[
\frac{k(k - 1)}{r(r - 1)}
\geqs
\frac{k(k - 1)}{(\gamma k)(\gamma k - 1)}
\geqs
\frac{1}{2\gamma^2}.
\]
Thus, there exists $S^* \subseteq S$ of size $k$ with
$|E[S^*]| \geqs \frac{\ell}{2\lambda\gamma^2}$. This means that
$\edge_{=k}(G) \geqs \ell/(2\lambda\gamma^2)$, which completes the soundness proof.
\end{proof}

We state a similar variant of {\sc PolyGapDkS} and provide an analogous reduction below.

\begin{definition}[PolyGapDkS$(\delta, t, \gamma)$]
For any $\delta \in (0, 1]$, $\gamma \geqs 1$ and $t \in \N$, the {\sc PolyGapDkS}$(\delta, t, \gamma)$ problem is the following: given a graph $G$ and $k \in \N$, distinguish between 
\begin{itemize}
\item (Completeness) There exists $S \subseteq V_G$ of size $k$ s.t. $G[S]$ contains at least $k^{t - \delta}$ copies of $t$-cliques, and,
\item (Soundness) $\den_{\leqs \gamma \cdot k^t}(G) < k^\delta$.
\end{itemize}
\end{definition}

\begin{prop} \label{prop:size-relax-poly}
For any $\delta \in (0, 1]$, $\gamma \geqs 1$ and $t \in \N$, there exists an FPT polynomial time reduction from {\sc PolyGapDkS}$(0.5\delta/t, t)$ to {\sc PolyGapDkS}$(\delta, t, \gamma)$.
\end{prop}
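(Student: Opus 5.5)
The reduction keeps the graph and only shrinks the size parameter. Given an instance $(G, k)$ of {\sc PolyGapDkS}$(\delta', t)$ with $\delta' := 0.5\delta/t$, I would output $(G, k')$ with
\[
k' := \lfloor (k/\gamma)^{1/t} \rfloor ,
\]
so that $\gamma \cdot k'^t \leqs k$. Since $k' \leqs k$ and the graph is unchanged, the reduction is FPT and polynomial time. When $k$ is below a constant $k_0 = k_0(\delta, t, \gamma)$, fixed by the asymptotic estimates below, the reduction instead solves the source instance by brute force in $n^{O(k_0)}$ time. It then outputs a fixed trivial yes- or no-instance.

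\textbf{Soundness.} Suppose $\den_{\leqs k}(G) < k^{\delta'}$. Because $\gamma k'^t \leqs k$, every set of size at most $\gamma k'^t$ has density less than $k^{\delta'}$. We have $k \leqs \gamma (k'+1)^t \leqs \gamma 2^t k'^t$. Therefore
\[
k^{\delta'} \leqs (\gamma 2^t)^{\delta/(2t)} \, k'^{\delta/2}.
\]
This is less than $k'^{\delta}$ once $k'^{\delta/2} > (\gamma 2^t)^{\delta/(2t)}$, which holds for $k \geqs k_0$. Hence $\den_{\leqs \gamma k'^t}(G) < k'^\delta$, as required.

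\textbf{Completeness.} Let $S$ have size $k$ with at least $k^{t-\delta'}$ copies of $t$-cliques. Pick a uniformly random $k'$-subset $S' \subseteq S$. A fixed $t$-clique survives in $S'$ with probability
\[
\binom{k-t}{k'-t}\Big/\binom{k}{k'} \;=\; \prod_{i=0}^{t-1}\frac{k'-i}{k-i} \;\geqs\; \left(\frac{k'-t}{k}\right)^t .
\]
For $k' \geqs 2t$ this is at least $(k'/(2k))^t$. So some $S'$ contains at least $k^{t-\delta'} \cdot k'^t/(2^t k^t) = k'^t k^{-\delta'}/2^t$ cliques. Using $k \leqs \gamma 2^t k'^t$ again, this is at least
\[
k'^{\,t-\delta/2} \cdot (\gamma 2^t)^{-\delta'} \cdot 2^{-t},
\]
which is at least $k'^{t-\delta}$ when $k'^{\delta/2} \geqs 2^t(\gamma 2^t)^{\delta'}$. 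Again this holds for $k \geqs k_0$.

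\textbf{Main obstacle.} The only delicate point is the constant bookkeeping: the floor in $k'$, the falling-factorial survival probability, and the leftover factors of $\gamma$ and $2^t$. All of these are constants depending only on $\delta, t, \gamma$. The halving of $\delta$, through the choice $\delta' = 0.5\delta/t$, leaves a slack of $k'^{\delta/2}$ in both directions, which absorbs them for large $k$. The threshold $k_0$ is chosen as the maximum of the constants required in the soundness and completeness steps, together with $k' \geqs 2t$.
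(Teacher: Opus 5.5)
Your proposal is correct and matches the paper's proof essentially step for step. It uses the same reduction $k' = \lfloor (k/\gamma)^{1/t}\rfloor$ on the unchanged graph, the same random $k'$-subset argument with survival probability at least $((k'-t)/k)^t$, and soundness via $\gamma k'^t \leqs k$ and $k \leqs \gamma(2k')^t$. The only difference is cosmetic: you make the small-$k$ case explicit (brute force, then a fixed trivial instance) and leave $k_0$ implicit, whereas the paper simply assumes $k \geqs 2\gamma(2^{t+1}\gamma)^{2t/\delta}$.
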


\begin{proof}
Given an instance $(G, k)$ of {\sc PolyGapDkS}$(0.5\delta/t, t)$, we create an instance $(G, k')$ as follows: the graph $G$ remains unchanged whereas we set $k'$ to be $\lfloor (k/\gamma)^{1/t} \rfloor$. Note that we may assume that $k$ is sufficiently large, as otherwise the problem can be solved in $n^{O(k)}$, which is polynomial when $k$ is constant. Specifically, we assume that $k \geqs 2\gamma \cdot (2^{t + 1} \gamma)^{2t/\delta}$; this also implies that $k' \geqs (2^{t + 1} \gamma)^{2/\delta}$. Observe that this implies that
\begin{align} \label{eq:tmp-bound}
k^{0.5\delta/t}
\leqs (\gamma(2k')^t)^{0.5\delta/t}
\leqs 2^{-t} \cdot (k')^{\delta} \cdot \left(\frac{(k')^{0.5\delta}}{2^{t + 1}\gamma}\right)^{-1}
\leqs 2^{-t} \cdot (k')^{\delta}
\end{align}

It is obvious that this is an FPT reduction; we next proceed to argue its completeness and soundness.

{\bf (Completeness)} Suppose that there exists $S \subseteq V_G$ containing $k$ vertices such that $G[S]$ contains at least $k^{t - 0.5\delta / t}$ copies of $t$-clique. Let $S'$ be a random subset of size $k'$ of $S$. The probability that each $t$-clique in $S$ remains in $S'$ is
\begin{align*}
\frac{\binom{k - t}{k' - t}}{\binom{k}{k'}} = \frac{k'(k' - 1) \cdots (k' - t + 1)}{k(k - 1) \cdots (k - t + 1)} \geqs \left(\frac{k' - t}{k}\right)^t.
\end{align*}
As a result, there must exist $S^* \subseteq S$ of size $k'$ that contains at least
\begin{align*}
\left(\frac{k' - t}{k}\right)^t \cdot k^{t - 0.5\delta/t} &= (k' - t)^t \cdot k^{-0.5\delta/t} \stackrel{\eqref{eq:tmp-bound}}{\geqs} (k'/2)^t \cdot (2^{t} \cdot (k')^{-\delta}) = (k')^{t - \delta}
\end{align*}
copies of $t$-cliques, which concludes the completeness part of the proof.

{\bf (Soundness)} The soundness guarantee of the original instance is that $\den_{\leqs \gamma \cdot (k')^t}(G) < k^{0.5\delta/t}$, which is at most $(k')^{\delta}$ due to~\eqref{eq:tmp-bound}.
\end{proof}

\section{A Strong Hardness for {\sc Densest {\it k}-Subhypergraph}}

As an intermediate problem, we will be reducing to a variant of the {\sc Densest $k$-Subhypergraph (D$k$SH)}. The standard version of {\sc D$k$SH} on $t$-uniform hypergraphs can be stated as follows: Given a universe $U$ and a collection $\cS$ of $t$-element subsets of $U$, find $T \subseteq U$ of size $k$ that entirely contains as many subsets $S \in \cS$ as possible. The variant we work with -- which is formalized below -- is a stronger version, where the completeness remains the same but, in the soundness, only a few $S \in \cS$ even satisfy $|S \cap T| > 1$. (In contrast, the standard gap version would only consider $|S \cap T| = t$.)

\begin{definition}[$t$-uniform StrongGapD$k$SH$(\lambda, \gamma)$]
For any $\lambda, \gamma \geqs 1$ and $t \in \N$, the $t$-uniform {\sc StrongGapD$k$SH$(\lambda, \gamma)$} problem is: given a universe $U$, a collection $\cS$ of $t$-element subsets of $U$ and $k, \ell \in \N$, distinguish between
\begin{itemize}
\item (Completeness) There exists $T \subseteq U$ of size $k$ such that $|\{S \in \cS \mid S \subseteq T\}| \geqs \ell$.
\item (Soundness) For any $T \subseteq U$ of size at most $\gamma \cdot k$, $|\{S \in \cS \mid |S \cap T| > 1\}| < \ell/\lambda$
\end{itemize}
\end{definition}

It is simple to see that {\sc GapDkS}$(\lambda, \gamma)$ can be cast as 2-uniform {\sc StrongGapD$k$SH$(\lambda, \gamma)$}.

\begin{lemma} \label{lem:dks-to-dksh}
For any $\lambda, \gamma \geqs 1$, there exists an FPT polynomial time reduction from {\sc GapDkS}$(\lambda, \gamma)$ to 2-uniform {\sc StrongGapD$k$SH$(\lambda, \gamma)$}.
\end{lemma}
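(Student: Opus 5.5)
The reduction I would use is the identity-like one. Given an instance $(G, k, \ell)$ of {\sc GapDkS}$(\lambda, \gamma)$, output the universe $U := V_G$, the collection $\cS := E_G$ (each edge $\{u,v\}$ viewed as a $2$-element subset of $U$), and the same $k$ and $\ell$. This clearly runs in polynomial time. It is an FPT reduction because the new parameter equals the old parameter $k$.

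For completeness, suppose $\edge_{\leqs k}(G) \geqs \ell$. Then some $S \subseteq V_G$ with $|S| \leqs k$ satisfies $|E_{G[S]}| \geqs \ell$. If $|S| < k$, pad $S$ with arbitrary vertices to get a set $T$ of size exactly $k$; this only adds induced edges. This padding needs $|V_G| \geqs k$. If instead $|V_G| < k$, the instance is trivial: the soundness case is then impossible, because taking $T = V_G$ gives $\edge_{\leqs \gamma k}(G) = |E_G| \geqs \ell > \ell/\lambda$. So the reduction may output a fixed completeness instance in that case. Every edge $\{u,v\} \subseteq T$ is a member of $\cS$ contained in $T$, so $|\{S \in \cS : S \subseteq T\}| \geqs \ell$.

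For soundness, suppose $\edge_{\leqs \gamma k}(G) < \ell/\lambda$, and take any $T \subseteq U$ with $|T| \leqs \gamma k$. Since every member of $\cS$ has exactly two elements, the condition $|S \cap T| > 1$ is equivalent to $S \subseteq T$. Hence the set $\{S \in \cS : |S \cap T| > 1\}$ is exactly $E_{G[T]}$. Its size is at most $\edge_{\leqs \gamma k}(G) < \ell/\lambda$, as required.

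There is no real obstacle here. The only points needing care are the padding to size exactly $k$ in the completeness case, together with the degenerate case $|V_G| < k$. The key observation is that for $2$-uniform sets the strong soundness condition coincides with the ordinary one.
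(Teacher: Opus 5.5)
Your proposal is correct and is the same identity reduction the paper uses: $U = V_G$, one 2-element set per edge, and the same $k$ and $\ell$; the paper simply declares completeness and soundness trivial from the definitions. You additionally handle the exact-size-$k$ padding and the degenerate case $|V_G|<k$, which the paper leaves implicit. There, the reduction should first check whether $|E_G|\geqs \ell$ before outputting a fixed yes-instance, since otherwise the unchanged instance already satisfies soundness.
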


\begin{proof}
Given an instance $(G, k, \ell)$ of {\sc GapDkS}$(\lambda, \gamma)$, we construct an instance $(U, \cS, k, \ell)$ for 2-uniform {\sc StrongGapD$k$SH$(\lambda, \gamma)$} by letting $U = V_G$ and letting each subset $S \in \cS$ correspond to an edge in $E_G$; that is, for each $e = \{u, v\} \in E_G$, create a subset $S_e = \{u, v\}$ and add it into $\cS$. The parameters $k$ and $\ell$ remain unchanged. It is obvious that the reduction is FPT, and the completeness and soundness of the reduction are trivial from definitions.
\end{proof}

Perhaps more interestingly, we show that {\sc PolyGapDkS}$(1/3, t, \gamma)$ can also be reduced to $t$-uniform {\sc StrongGapD$k$SH$(\lambda, \gamma)$}. We remark that the constant $1/3$ here can be changed to arbitrary number strictly less than 1/2; we simply set it to 1/3 to avoid introducing yet another parameter.

\begin{lemma} \label{lem:polydks-to-dksh}
For any $\lambda, \gamma \geqs 1$ and $t \in \N$, there exists an FPT polynomial time reduction from {\sc PolyGapDkS}$(1/3, t, \gamma)$ to $t$-uniform {\sc StrongGapD$k$SH$(\lambda, \gamma)$}.
\end{lemma}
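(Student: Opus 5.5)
The natural reduction keeps the graph and lets the hyperedges be the $t$-cliques. Given an instance $(G,k)$ of {\sc PolyGapDkS}$(1/3,t,\gamma)$, output $U=V_G$, $\cS=$ the family of all $t$-cliques of $G$ (computable in $n^{O(t)}$ time, polynomial for fixed $t$), the same $k$, and $\ell=\lceil k^{t-1/3}\rceil$. As in Proposition~\ref{prop:size-relax-poly}, we may assume $k$ exceeds any fixed constant depending on $\lambda,\gamma,t$, since otherwise the instance is solvable by brute force in $n^{O(k)}$ time. The reduction is clearly FPT. Completeness is immediate: the $k$-set $S$ with at least $k^{t-1/3}$ copies of $t$-cliques is a valid $T$.

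For soundness I argue contrapositively. Suppose some $T$ with $|T|\leqs\gamma k$ has at least $m:=\lceil\ell/\lambda\rceil$ cliques $S\in\cS$ with $|S\cap T|\geqs 2$. Fix exactly $m$ of them, $S_1,\dots,S_m$, and set $W=T\cup S_1\cup\cdots\cup S_m$. Then $|W|\leqs\gamma k+tm\leqs\gamma k^t$, so by the soundness hypothesis every subgraph of $G[W]$ has density $<d:=k^{1/3}$. Hence $G[W]$ is $2d$-degenerate: order $W$ so that each vertex has fewer than $2d$ later neighbours. Each clique $S_i$ is then determined by its earliest vertex $x$ together with $t-1$ of the fewer than $2d$ out-neighbours of $x$.

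If $x\in T$, the number of such cliques is at most $|T|(2d)^{t-1}=O_{\gamma,t}(k^{(t+2)/3})$, which is $o(k^{t-1/3})$ for every $t\geqs 2$. The cliques whose earliest vertex $x\notin T$ are the main obstacle. For such a clique, $x$ has at least two out-neighbours in $T$, so the count is at most
\[
\sum_{x\in X}\binom{\mathrm{outdeg}_T(x)}{2}(2d)^{t-3},
\]
where $X$ is the set of vertices outside $T$ with at least two out-neighbours in $T$. This sum is controlled only if $|X|$ is $O(k\cdot\mathrm{poly}(d))$, and the density hypothesis does not force that. A ``book'' consisting of one edge $uv\in T$ with huge common neighbourhood has density about $2$ yet contains arbitrarily many triangles meeting $T$ in two vertices.

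So I would first try to exploit the sparsity of $G[T\cup X]$ to bound $|X|$. I expect this to fail in general, and the fix will have to be built into the reduction rather than the analysis. The concrete plan is to discard every $t$-clique containing a pair $\{u,v\}$ that lies in more than $k^{t-2}$ cliques of $G$. In soundness this caps the number of cliques per pair, so the count becomes at most $\edge(G[T])\cdot k^{t-2}<\gamma k^{4/3}k^{t-2}$, which is $o(k^{t-1/3})$. What then remains to be checked is completeness: the planted $k$-set must still retain $\Omega(k^{t-1/3})$ cliques after pruning. I would attempt this by first subsampling $G$, or by replacing $\ell$ with a slightly smaller power of $k$ and using the slack between $1/3$ and $1/2$ mentioned before the lemma. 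Pinning down this pruning/subsampling step is where I expect the real difficulty to lie.
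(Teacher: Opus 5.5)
There is a genuine gap: the proposal ends without a working reduction. Your diagnosis of the vertex-universe construction is correct. For $t\ge 3$, the book on an edge $uv$ is a NO-instance of PolyGapDkS$(1/3,t,\gamma)$ that lands outside the soundness case, because $T=\{u,v\}$ meets arbitrarily many hyperedges in two elements. The proposed repair, however, fails. Discarding every $t$-clique that contains a pair lying in more than $k^{t-2}$ cliques does restore soundness, but it destroys completeness. Subsampling or shrinking $\ell$ cannot save it, because the completeness condition says nothing about $G$ outside the planted set. Take a $k$-clique $S$, and for every pair $u,v\in S$ attach more than $k^{t-2}$ fresh $(t-2)$-cliques (more than any threshold $\tau(k)$ you choose), each joined only to $u$ and $v$. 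This is still a YES-instance. Yet every $t$-clique of the new graph contains such a heavy pair: the cliques inside $S$ do, and the only other $t$-cliques are the sets $Q\cup\{u,v\}$. So after pruning $\cS=\emptyset$. Any rule keyed to global pair multiplicities breaks in this way.

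The missing idea is to change the ground set rather than prune. Let $U$ be the set of $(t-1)$-cliques of $G$. Let each $t$-clique $C$ give the hyperedge $S_C=\binom{C}{t-1}$, and set $k'=\binom{k}{t-1}$ and $\ell=k^{t-1/3}$.

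Completeness is then immediate: take as $T$ all $(t-1)$-cliques inside the planted set.

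For soundness, note that two distinct $C_1,C_2\in S_C\cap T$ satisfy $C_1\cup C_2=C$. Hence the number of hyperedges meeting $T$ twice is at most the number of pairs $(C_1,C_2)\in T\times T$ with $|C_1\cap C_2|=t-2$ whose union is a clique.
\begin{itemize}
\item Group these pairs by $I=C_1\cap C_2$. They are exactly the edges of $G[V_I]$, where $V_I=\{C'\setminus I : I\subseteq C'\in T\}$.
\item Since $|V_I|\le |T|\le \gamma k'\le\gamma k^t$, the density hypothesis gives fewer than $k^{1/3}|V_I|$ such pairs for each $I$.
\item Summing and using $\sum_I |V_I|=(t-1)|T|$, the total is below $(t-1)\gamma k^{t-2/3}$, which is at most $\ell/\lambda$ once $k\ge(\lambda\gamma t)^3$.
\end{itemize}
This is exactly what neutralises your book example: each captured triangle of the book on $uv$ now costs a separate element $uw_i$ or $vw_i$ of $T$, and the budget $\gamma k'$ forbids capturing $\ell/\lambda$ of them. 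Rescaling $k$ to $\binom{k}{t-1}$ is what lets completeness survive without any pruning.
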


\begin{proof}
Given an instance $(G, k)$ of {\sc PolyGapDkS}$(1/3, t, \gamma)$, we construct an instance $(U, \cS, k', \ell)$ of $t$-uniform {\sc StrongGapD$k$SH$(\lambda, \gamma)$} as follows. The ground set $U$ consists of all the $(t - 1)$-cliques in graph $G$, i.e., $U = \{C_{t-1} \in \binom{V}{t - 1} \mid C_{t-1} \text{ forms a clique in } G\}$. We then create one subset for each $t$-clique in $G$. Specifically, for each $C_t \in \binom{V}{t}$ that forms a clique in $G$, we create a subset $S_{C_t}$ that contains all $(t - 1)$-size subset of $C_t$, i.e., $S_{C_t} = \binom{C_t}{t - 1}$. Finally, let $k' = \binom{k}{t - 1}$ and $\ell = k^{t - 1/3}$.

It is obvious that the reduction is an FPT reduction. We next proceed to prove the completeness and soundness of the reduction. Throughout the proof, we will assume that $k$ is sufficiently large; specifically, that $k \geqs (\lambda\gamma t)^3$. We remark that this assumption can be made without loss of generality since {\sc PolyGapDkS} can be solved in time $|G|^{O(k)}$ which is polynomial for constant $k$.

{\bf (Completeness)} Suppose that there exists a $k$-vertex subgraph $H$ of $G$ that contains at least $k^{t - 1/3} = \ell$ copies of $t$-clique. Let $T \subseteq U$ be the set of all $(t - 1)$-cliques inside $H$. Clearly, $T$ has size at most $\binom{k}{t-1} = k'$. Moreover, $S_C \subseteq T$ for every $t$-clique $C$ in $H$, and there are at least $\ell$ such $C$'s.

{\bf (Soundness)} Assume that $\den_{\leqs \gamma \cdot k^t}(G) < k^{1/3}$. Let us consider any subset $T \subseteq U$ of size at most $\gamma \cdot k'$. Let $Y$ denote all the $t$-cliques $C$'s such that $|S_C \cap T| > 1$. We bound the size of $Y$ by double counting the number of pairs of $(t - 1)$-cliques in $T$ whose union belongs to $Y$. That is, we will count the set $Z = \{(C_1, C_2) \in T \times T \mid C_1 \cup C_2 \in Y\}$ in two ways. First, observe that, for every $C \in Y$ , if we pick any distinct $C_1, C_2$ from $S_C \cap T$, then we have $C_1 \cup C_2 = C \in Y$. As a result, we have $|Z| \geqs |Y|$. In other words, it suffices for us to upper bound the size of $Z$.

To bound $Z$, observe that, for every $(C_1, C_2) \in Z$, we must have $|C_1 \cap C_2| = t - 2$. For each $\Cint \in \binom{V}{t - 2}$, let us define $Z_{\Cint}$ to be $\{(C_1, C_2) \in Z \mid C_1 \cap C_2 = \Cint\}$. Our observation implies that 
\begin{align} \label{eq:sum-by-t2-clique}
|Z| = \sum_{\Cint \in \binom{V}{t - 2}} |Z_{\Cint}|.
\end{align}
Furthermore, let us define $V_{\Cint} \subseteq V$ to be the set of every vertex such that the union between itself and $\Cint$ belongs to $T$. More formally, let $V_{\Cint} = \cup_{\Cint \subseteq C' \in T} (C' \setminus \Cint)$. It is simple to see that there is a one-to-one correspondence between $E_{G[V_{\Cint}]}$ and $Z_{\Cint}$: for every pair of distinct $C_1, C_2 \in T$ both containing $\Cint$, $(C_1, C_2) \in Z_{\Cint}$ if and only if there is an edge between the vertex $C_1 \setminus \Cint$ and $C_2 \setminus \Cint$. Since $|V_{\Cint}| \leqs |T| \leqs \gamma \cdot k'$, our assumption implies that $|Z_{\Cint}| = |E_{G[V_{\Cint}]}| < k^{1/3} \cdot |V_{\Cint}|$.

Plugging this back to \eqref{eq:sum-by-t2-clique} and rearranging it further, we get
\begin{align*}
|Z| < k^{1/3} \cdot \sum_{{\Cint} \in \binom{V}{t - 2}} |V_{\Cint}|
&= k^{1/3} \cdot \left|\left\{(C', {\Cint}) \in T \times \binom{V}{t - 2} \middle\vert {\Cint} \subseteq C'\right\}\right| \\
&= k^{1/3} \cdot (t - 1) \cdot |T| \\
&\leqs k^{1/3} \cdot (t - 1) \cdot (\gamma \cdot k') \\
&\leqs (\gamma t) k^{t - 2/3} \\
(\text{Since } k \geqs (\lambda\gamma t)^3) &\leqs k^{t - 1/3}/\lambda = \ell/\lambda
\end{align*}
The above lemma together with $|Y| \leqs |Z|$ implies the desired bound in the soundness case.
\end{proof}

\section{From Strong {\sc D{\it k}SH} to {\sc DAL{\it k}S}} \label{sec:dksh-to-dalks}

Finally, we will reduce {\sc D$k$SH} to {\sc DAL$k$S}. The intuition behind the reduction is quite simple. Roughly speaking, the construction is a disjoint union between a clique (with an appropriate size) and the incidence graph\footnote{Recall that an incidence graph of a hypergraph is a bipartite graph between the universe and subsets, whereas there is an edge between each element to every subset containing it.} of the {\sc D$k$SH} instance. The ideal solution we wish is to take the clique together with the elements and subsets in the incidence graph that corresponds to a dense $k$-subhypergraph. The inclusion of the clique ensures that we do not select a subhypergraph of size much larger than $k$, as otherwise it would ``dilute'' the clique's density too much. 

We are now ready to formalize this intuition. Below, we define the gap version of {\sc DAL$k$S} and provide the reduction. The reduction mainly follows the overview above, except that we make copies of vertices in the incidence graph.

\begin{definition}[GapDAL$k$S$(\lambda)$]
For any $\lambda \geqs 1$, the {\sc GapDAL$k$S}$(\lambda)$ problem is: given a graph $G$, $k \in \N$ and $\alpha \in \R^+$, distinguish between (Completeness) $\den_{\geqs k}(G) \geqs \alpha$ and (Soundness) $\den_{\geqs k}(G) < \alpha / \lambda$.
\end{definition}

\begin{lemma} \label{lem:reduction-main}
For any $\varepsilon \in (0, 1]$ and any $t \in \N$, there exists an FPT reduction from $t$-uniform \\ {\sc StrongGapD$k$SH}$(20t^2/\varepsilon, 10^7t^5/\varepsilon^4)$ to {\sc GapDAL$k$S}$(2 - 1/t - \varepsilon)$.
\end{lemma}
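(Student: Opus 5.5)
The plan is to build the gadget sketched at the start of Section~\ref{sec:dksh-to-dalks}: a clique disjoint from a blown-up incidence graph of the {\sc D$k$SH} instance. The parameters below are a first guess that the analysis will have to fix; I have not checked that they give exactly $2 - 1/t - \varepsilon$.

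\textbf{Construction.} Given $(U, \cS, k, \ell)$:
\begin{itemize}
\item each element $u \in U$ is replaced by $m$ copies;
\item each $S \in \cS$ is replaced by $r$ copies;
\item each copy of $S$ is joined to all $m$ copies of each of its $t$ elements, so it has degree $tm$.
\end{itemize}
Disjointly we add a clique $K_q$. The size requirement is $K = q + mk + r\ell$. The parameters $m, r, q$ and the threshold $\alpha$ depend only on $k$, $\ell$, $t$ and $\varepsilon$. We take $r$ large enough that the $r\ell$ subset copies dominate the $mk$ element copies, and choose $q$ and the block size $B \approx r\ell$ so that the clique's density $\approx q/2$ is balanced against the per-subset degree $tm$.

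We also need the new parameter to be bounded by a function of $k$, and we need $\ell \le \binom{k}{t}$. With $q$, $m$, $r$ polynomial in $k, t, 1/\varepsilon$, the output parameter $K$ is bounded in $k$, so the reduction is FPT.

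\textbf{Completeness.} Take the clique, all copies of the $k$ elements of the good $T$, and all copies of the $\ell$ subsets contained in $T$. Every subset vertex then has its full degree $tm$. The density is
\[
\frac{\binom{q}{2} + tm\,r\ell}{q + mk + r\ell},
\]
and we set $\alpha$ equal to this quantity.

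\textbf{Soundness.} Fix any $W$ with $|W| \ge K$. Its density is (clique edges $+$ incidence edges)$/|W|$, and the clique contributes at most about $q|W \cap K_q|/2$ edges. Let $T'$ be the set of elements having at least one copy in $W$. There are two cases.
\begin{itemize}
\item \emph{Case $|T'| \le \gamma k$.} By soundness of {\sc StrongGapD$k$SH}, fewer than $\ell/\lambda$ subsets meet $T'$ in more than one element. Those subsets contribute at most $r(\ell/\lambda)\,tm$ edges, which is negligible once $\lambda = 20t^2/\varepsilon$. Every other subset vertex in $W$ has at most $m$ neighbours in $W$. So the incidence part has at most $\approx m$ edges per chosen subset vertex, instead of $tm$.
\item \emph{Case $|T'| > \gamma k$.} Apply soundness to a subset of $T'$ of size exactly $\gamma k$, chosen at random (or by an averaging/charging argument), and transfer the conclusion back to $W$. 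The large number of element copies forces $W$ to pay for many vertices. Each element copy contributes degree only to subsets that hit it, and most subsets hit only one chosen element. Hence the incidence edges are still at most about $m$ per subset vertex plus $t$ per element-copy vertex. This dilution is what $\gamma = 10^7t^5/\varepsilon^4$ is meant to pay for.
\end{itemize}

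\textbf{Optimisation.} In both cases we are left to maximise a ratio of the form
\[
\frac{\binom{x}{2} + m y + \text{small}}{|W|}, \qquad x \le q,\quad x + y + (\text{element copies}) \ge K.
\]
Here $x = |W \cap K_q|$ and $y$ is the number of subset vertices in $W$. This is a one-variable optimisation: either take the whole clique padded with sparse incidence vertices, or drop the clique. Choosing $q$ and $m$ so that the completeness value beats this maximum by the factor $2 - 1/t - \varepsilon$ is then a calculation; the $1/t$ loss presumably comes from the soundness optimum, in which each padding subset vertex still carries one edge.

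\textbf{Main obstacle.} The hard step is the case $|T'| > \gamma k$. A generic averaging over $\gamma k$-subsets loses a factor about $(|T'|/\gamma k)^2$ on pairwise intersections, and this loss must be absorbed by the extra cost of the element copies in $W$. I would first try a direct charging bound: every incidence edge is charged either to a subset vertex with one chosen neighbour, or to an element copy. For the element copies I would bound the multiply-hit subsets via a partition of $T'$ into blocks of size $\gamma k$, applying soundness to each pair of blocks. If that loses too much, I would instead increase $m$ relative to $r$, so that element copies are expensive enough that taking more than $\gamma k$ elements is never beneficial.
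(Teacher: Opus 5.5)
\textbf{Gap: the case $|T'|>\gamma k$ is not handled.} The logic of your Case~1 is fine. Case~2 is a genuine gap, and it comes from defining $T'$ as the set of elements with \emph{at least one} copy in $W$.

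\begin{itemize}
\item A lone element copy costs one vertex, while $K\approx r\ell$. So $W$ can push $|T'|$ far beyond $\gamma k$ at negligible cost, and $|T'|>\gamma k$ tells you nothing about $|W|$.
\item Such lone copies genuinely pay off: a single copy of $u$ adds an edge to every chosen copy of every subset containing $u$.
\item The block-pair repair does not work. It bounds the multiply-hit subsets only by roughly $(|T'|/\gamma k)^2\,\ell/\lambda$, each charged at full weight $r\cdot tm$. This grows quadratically in $|T'|$, while $W$ pays only linearly.
\item Increasing $m$ does not work either. It makes no single copy more expensive, and completeness needs $mk\ll r\ell$ anyway.
\item Your intermediate bound ``at most about $m$ per subset vertex plus $t$ per element copy'' is false. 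For $W=V_G$, every subset copy has degree $tm$. An element copy has degree $r$ times the number of sets containing it, not $t$.
\end{itemize}
What actually makes very large $W$ harmless is dilution of the clique, and your case split never uses it.

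\textbf{Missing idea.} Weight elements by multiplicity, and split on $|W|$ rather than on $|T'|$; this is how the paper argues.

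\begin{itemize}
\item Take $m=\ell$, $r=\lceil 10^4t^3k/\varepsilon^2\rceil$ and $\mu=0.1\varepsilon/t$. Call $u$ \emph{heavy} if at least $\mu m$ of its copies lie in $W$.
\item A subset with at most one heavy element has degree $<(1+(t-1)\mu)m$, no matter how many light elements it meets.
\item There are at most $|W|/(\mu m)\le\gamma k$ heavy elements whenever $|W|\le 10tK/\varepsilon$. So your troublesome case never arises in that range.
\item A double count over subset copies then finishes this range. If $W\setminus K_q$ (which has size at least $K/2$) had density at least $(1+t\mu)m$, then at least $\mu|W\setminus K_q|/(tr)\ge\ell/\lambda$ subsets would contain two heavy elements, contradicting soundness.
\item For $|W|>10tK/\varepsilon$, no hypergraph soundness is needed. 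The clique contributes at most $\binom{q}{2}/|W|<0.1\varepsilon m$, and the incidence graph has density $<tm$.
\end{itemize}

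Finally, your balancing condition is off. You need $\binom{q}{2}\approx(t-1)m\cdot r\ell$, i.e.\ the clique should add $(t-1)m$ to the density of a $K$-vertex set. It should not be set by $q/2\approx tm$, which makes the clique negligible. With the correct balance, completeness gives $(2t-1)m$, and both soundness regimes give about $tm$. That is where the factor $2-1/t$ comes from.
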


\begin{figure}
\begin{tikzpicture}[
    node distance=1.5cm and 3cm,
    vertex/.style={circle, draw, minimum size=0.8cm, inner sep=1pt, font=\small},
    clique_node/.style={circle, draw, fill=gray!30, minimum size=0.4cm, inner sep=0pt},
    group_box/.style={draw, dashed, rounded corners, inner sep=15pt},
    edge_style/.style={thick, color=black!70}
]

\node[clique_node] (c1) at (0, 1) {};
\node[clique_node] (c2) at (1, 1) {};
\node[clique_node] (c3) at (1, 0) {};
\node[clique_node] (c4) at (0, 0) {};
\node[clique_node] (c5) at (0.5, 1.6) {};

\foreach \i in {1,...,5} {
    \foreach \j in {\i,...,5} {
        \draw (c\i) -- (c\j);
    }
}

\node[fit=(c1)(c2)(c3)(c4)(c5), group_box, label=below:{\shortstack{$V_{\cli}$\\(Clique of $x$ vertices)}}] (clique_box) {};


\node[vertex] (u1_1) at (5, 2.5) {$u_{1,1}$};
\node at (5, 1.7) {$\vdots$};
\node[vertex] (u1_c1) at (5, 0.9) {$u_{1,c_1}$};

\node[vertex] (u2_1) at (5, -0.5) {$u_{2,1}$};
\node at (5, -1.3) {$\vdots$};
\node[vertex] (u2_c1) at (5, -2.1) {$u_{2,c_1}$};

\node[fit=(u1_1)(u2_c1), group_box, label=below:{\shortstack{$U \times [c_1]$\\($c_1$ copies of each element)}}] (U_box) {};

\node[vertex, fill=blue!10] (s1_1) at (10, 2) {$S_{1,1}$};
\node at (10, 1.2) {$\vdots$};
\node[vertex, fill=blue!10] (s1_c2) at (10, 0.4) {$S_{1,c_2}$};

\node[vertex, fill=red!10] (s2_1) at (10, -1) {$S_{2,1}$};
\node at (10, -1.8) {$\vdots$};
\node[vertex, fill=red!10] (s2_c2) at (10, -2.6) {$S_{2,c_2}$};

\node[fit=(s1_1)(s2_c2), group_box, label=below:{\shortstack{$\cS \times [c_2]$\\($c_2$ copies of each subset)}}] (S_box) {};


\draw[edge_style, color=blue!60] (u1_1) -- (s1_1);
\draw[edge_style, color=blue!60] (u1_1) -- (s1_c2);
\draw[edge_style, color=blue!60] (u1_c1) -- (s1_1);
\draw[edge_style, color=blue!60] (u1_c1) -- (s1_c2);

\draw[edge_style, color=blue!60] (u2_1) -- (s1_1);
\draw[edge_style, color=blue!60] (u2_1) -- (s1_c2);
\draw[edge_style, color=blue!60] (u2_c1) -- (s1_1);
\draw[edge_style, color=blue!60] (u2_c1) -- (s1_c2);

\draw[edge_style, color=red!60] (u2_1) -- (s2_1);
\draw[edge_style, color=red!60] (u2_1) -- (s2_c2);
\draw[edge_style, color=red!60] (u2_c1) -- (s2_1);
\draw[edge_style, color=red!60] (u2_c1) -- (s2_c2);
\end{tikzpicture}
\caption{An Illustration of the reduction in Lemma \ref{lem:reduction-main}. The graph $G$ is a disjoint union between an $x$-clique and a repeated version of the incidence graph of $(U, \cS)$ where each element in $U$ is copied $c_1$ times and each subset in $\cS$ is copied $c_2$ times.} \label{fig:reduction-main}
\end{figure}

\begin{proof}
Given an instance $(U, \cS, k, \ell)$ of $t$-uniform {\sc StrongGapD$k$SH}$(20t^2/\varepsilon, 10^7t^5/\varepsilon^4)$, we construct an instance $(G, k', \alpha)$ of {\sc GapDAL$k$S}$(2 - 1/t - \varepsilon)$ as follows. For convenience, let $c_1 := \ell$, $c_2 := \lceil 10^4t^3k/\varepsilon^2\rceil$ and $x := \lceil\sqrt{2(t - 1)\ell c_1 c_2}\rceil$. The vertex set $V_{G}$ of $G$ is a union of two disjoint sets $V_{\cli}$ and $V_H$; there are no edges between the two sets. $V_{\cli}$ consists of $x$ vertices which simply induces a clique. The graph $H := G[V_H]$ is a bipartite graph defined as follows. On one side of the bipartite graph, we have $c_1$ copies of each element of $U$ and, on the other side, we have $c_2$ copies of each subset $S \in \cS$, i.e., $V_H = (U \times [c_1]) \cup (\cS \times [c_2])$. The edges $E_H$ are defined naturally: every copy of $u \in U$ is connected to every copy of each subset containing $u$, i.e., $E_H = \{((u, i), (S, j)) \mid u \in S\}$. Finally, let $k' := x + c_1k + c_2\ell$ and $\alpha = (2t - 1)c_1(1 - 0.1\varepsilon/t)$.

An illustration of the reduction is given in Figure~\ref{fig:reduction-main}. It is obvious that this reduction is FPT and runs in polynomial time. We will next prove its completeness and soundness.

{\bf (Completeness)} Suppose that there exists $T \subseteq U$ of size $k$ such that $|\{S \in \cS \mid S \subseteq T\}| \geqs \ell$. Pick any subset $\cS^*$ of $\{S \in \cS \mid S \subseteq T\}$ of size $\ell$ and let $W = V_{\cli} \cup (T \times [c_1]) \cup (\cS^* \times [c_2])$. Clearly, $|W| = k'$ and, from how the set $W$ is defined, we have
\begin{align*}
\den_{\geqs k'}(G) \geqs \den(G[W]) = \frac{|E_{G[W]}|}{k'} &= \frac{t\ell c_1 c_2 + \binom{x}{2}}{x + c_1k + c_2\ell} \\
&\geqs \frac{(2t - 1)\ell c_1c_2 - x/2}{c_2\ell + x + c_1k} \\
&= (2t - 1)c_1 \left(\frac{1 - \frac{x}{2(2t - 1)\ell c_1c_2}}{1 + \frac{x}{c_2\ell} + \frac{c_1k}{c_2\ell}}\right) \\
&\geqs (2t - 1)c_1\left(1 - \frac{x}{2(2t - 1)\ell c_1c_2}\right)\left(1 - \frac{x}{c_2\ell} - \frac{c_1k}{c_2\ell}\right) \\
&\geqs (2t - 1)c_1\left(1 - \frac{x}{2(2t - 1)\ell c_1c_2} - \frac{x}{c_2\ell} - \frac{c_1k}{c_2\ell}\right) \\
&\geqs (2t - 1)c_1\left(1 - \frac{2x}{c_2\ell} - \frac{c_1k}{c_2\ell}\right) \\
&\geqs (2t - 1)c_1\left(1 - 4\sqrt{2(t - 1)/c_2} - 0.01\varepsilon/t\right) \\
&\geqs (2t - 1)c_1\left(1 - 0.09\varepsilon/t - 0.01\varepsilon/t\right)\\
&\geqs (2t - 1)c_1(1 - 0.1\varepsilon/t) = \alpha.
\end{align*}

{\bf (Soundness)} Suppose that, for every $T \subseteq U$ of size at most $\left(10^7t^5/\varepsilon^4\right) \cdot k$, $|\{S \in \cS \mid |S \cap T| > 1\}| < \ell/\left(20t^2/\varepsilon\right)$. To bound $\den_{\geqs k'}(G)$, we first prove the following intermediate lemma, which gives a bound on the density of subgraphs of $H$ of small sizes.

\begin{lemma} \label{lem:aux-decode}
For every $W' \subseteq V_H$ such that $k'/2 \leqs |W'| \leqs 10tk'/\varepsilon$, $\den(G[W']) < (1 + 0.1\varepsilon)c_1$.
\end{lemma}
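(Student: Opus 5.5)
The plan is to bound $|E_{G[W']}|$ by charging every edge to its endpoint in $\cS \times [c_2]$, and to split the subset copies according to a set $T$ of ``heavy'' elements that is small enough for the soundness assumption to apply. Write $W' = A \cup B$ with $A = W' \cap (U \times [c_1])$ and $B = W' \cap (\cS \times [c_2])$. Since $H$ is bipartite, every edge of $G[W']$ joins $A$ and $B$. For $u \in U$ let $a_u := |A \cap (\{u\} \times [c_1])| \leqs c_1$. A vertex $(S, j) \in B$ then has exactly $\sum_{u \in S} a_u \leqs t c_1$ neighbors in $A$.

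\textbf{Heavy elements.} Fix the threshold $\theta := 0.05\varepsilon/t$ and let $T := \{u \in U \mid a_u \geqs \theta c_1\}$. Then $|T| \leqs |A|/(\theta c_1) \leqs 10tk'/(\varepsilon \theta c_1)$. Next we bound $k'/c_1 = x/\ell + k + c_2$. From $c_1 = \ell$ and the definition of $x$ we get $x/\ell \approx \sqrt{2(t-1)c_2} \leqs c_2$, and $k \leqs c_2$. Hence $k'/c_1 \leqs 3c_2 + O(1) \leqs 4 \cdot 10^4 t^3 k/\varepsilon^2$ (this step needs only routine care with ceilings). Plugging in $\theta$ gives $|T| \leqs 8\cdot 10^6\, t^5 k/\varepsilon^4 \leqs (10^7 t^5/\varepsilon^4)\, k$. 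The soundness assumption therefore applies to $T$: fewer than $\varepsilon\ell/(20t^2)$ subsets $S \in \cS$ satisfy $|S \cap T| > 1$.

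\textbf{Counting edges.} We treat the two kinds of subsets separately.
\begin{itemize}
\item \emph{Bad subsets} ($|S \cap T| > 1$). There are fewer than $\varepsilon\ell/(20t^2)$ of them, each has at most $c_2$ copies in $B$, and each copy has at most $tc_1$ edges. They contribute fewer than $\varepsilon \ell c_1 c_2/(20t)$ edges. Since $|W'| \geqs k'/2 \geqs c_2\ell/2$, this is less than $0.1\varepsilon c_1 |W'|/t$.
\item \emph{Good subsets} ($|S \cap T| \leqs 1$). Each copy has at most $c_1 + (t-1)\theta c_1 \leqs c_1(1 + 0.05\varepsilon)$ neighbors: at most one element contributes up to $c_1$, and every other element contributes less than $\theta c_1$. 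With $|B| \leqs |W'|$, the good subsets contribute at most $c_1(1+0.05\varepsilon)|W'|$ edges.
\end{itemize}
Summing the two contributions gives $\den(G[W']) < c_1(1 + 0.05\varepsilon + 0.1\varepsilon/t)$. For $t \geqs 2$ this is at most $(1+0.1\varepsilon)c_1$. For $t = 1$ no subset has $|S \cap T| > 1$ and the light-element term vanishes, so the density is at most $c_1$ outright.

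\textbf{Main obstacle.} The delicate step is choosing $T$. Projecting $A$ onto $U$ naively could give a set far larger than $\gamma k$, because $|A|$ may be of order $c_2 \ell$. The threshold $\theta$ has to balance two demands: it must be small enough that light elements add only $O(\varepsilon) c_1$ per subset copy, and large enough that $|T| \leqs |A|/(\theta c_1)$ fits under $(10^7 t^5/\varepsilon^4)k$. The key estimate $k'/c_1 = O(t^3 k/\varepsilon^2)$ is exactly what makes this balance possible with the given constants.
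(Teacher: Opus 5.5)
Your proof is correct and follows essentially the same argument as the paper: the same heavy-element set $T$, bounded via $|T| \leq |A|/(\theta c_1)$ and $k' = O(c_2\ell)$, the same application of the soundness hypothesis to $T$, and the same charging of edges to subset copies, with degree at most $(1+(t-1)\theta)c_1$ for good subsets and at most $tc_1$ for bad ones. The differences are cosmetic: the paper argues by contradiction and uses the threshold $\mu = 0.1\varepsilon/t$, so the two error terms sum to exactly $(t-1)\mu + \mu = 0.1\varepsilon$, which avoids your separate treatment of $t = 1$.
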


\begin{subproof}[Proof of Lemma~\ref{lem:aux-decode}]
For convenience, let us denote $0.1\varepsilon/t$ by $\mu$. Suppose for the sake of contradiction that there exists a set $W' \subseteq V_H$ such that $k'/2 \leqs |W'| \leqs 10tk'/\varepsilon$ and $\den(G[W']) \geqs (1 + t\mu)c_1$. Let $T' := W' \cap (U \times [c_1])$, and let $T \subseteq U$ denote the elements of $U$ such that at least $\mu c_1$ of its copies appear in $T'$, i.e., $T = \{u \in U \mid |(\{u\} \times [c_1]) \cap T'| \geqs \mu c_1\}$. Observe that 
\begin{align*}
|T| \leqs \frac{|T'|}{\mu c_1}
\leqs \frac{10tk'/\varepsilon}{\mu c_1}
\leqs \frac{10t(3c_2\ell)/\varepsilon}{\mu c_1}
\leqs \frac{300000t^4 k \ell/\varepsilon^3}{(0.1\varepsilon/t)\ell} \leqs \left(10^7t^5/\varepsilon^4\right) \cdot k
\end{align*}

Let $\cS_T$ denote $\{S \in \cS \mid |S \cap T| > 1\}$. Next, we claim that $|\cS_T| \geqs \frac{\mu|W'|}{tc_2}$. To see that this is the case, first note that, for each $S \in \cS \setminus \cS_T$, we have $\deg_{G[W']}((S, i)) < (1 + (t - 1)\mu)c_1$ for all $i \in [c_2]$; this is because $(S, i)$ is adjacent to only copies of $u \in S$, and $S \notin \cS_T$ implies that at most one such $u$ has more than $\mu c_1$ copies in $T'$. Equipped with this observation, we can bound $|\cS_T|$ as follows.
\begin{align*}
(1 + t\mu) c_1 \cdot |W'| \leqs \den(G[W']) \cdot |W'| &= |E_{G[W']}| \\
&= \sum_{S \in \cS_T, i \in [c_2] \atop (S, i) \in W'} \deg_{G[W']}((S, i)) + \sum_{S \notin \cS_T, i \in [c_2] \atop (S, i) \in W'} \deg_{G[W']}((S, i)) \\
&= \sum_{S \in \cS_T, i \in [c_2] \atop (S, i) \in W'} tc_1 + \sum_{S \notin \cS_T, i \in [c_2] \atop (S, i) \in W'} (1 + (t - 1)\mu)c_1 \\
&\leqs tc_1c_2|\cS_T| + (1 + (t - 1)\mu)c_1|W'|,
\end{align*}
which immediately implies that $|\cS_T| \geqs \frac{\mu|W'|}{tc_2}$ as previously claimed. As a result, $T$ is a set of size at most $\left(10^7t^5/\varepsilon^4\right) \cdot k$ such that
\begin{align*}
|\cS_T| \geqs \frac{\mu|W'|}{tc_2}
\geqs \frac{(0.1\varepsilon/t)(k'/2)}{tc_2}
\geqs \frac{(0.1\varepsilon/t)(c_2\ell/2)}{tc_2}
\geqs \frac{\ell}{20t^2/\varepsilon},
\end{align*}
which is a contradiction.
\end{subproof}

With the above lemma ready, it is now quite easy to bound $\den_{\geqs k'}(G)$. Before we do so, let us first derive the following useful lower bound for $\frac{\alpha}{2 - 1/t - \varepsilon}$.
\begin{align*}
\frac{\alpha}{2 - 1/t - \varepsilon} = \frac{tc_1(1 - 0.1\varepsilon/t)}{1 - \frac{t \varepsilon}{2t - 1}} \geq \frac{tc_1(1 - 0.1\varepsilon/t)}{1 - 0.5\varepsilon/t} \geq tc_1(1 + 0.4\varepsilon/t).
\end{align*}
where the second inequality comes from the fact that $\frac{1 - \nu_1}{1 - \nu_2} \geqs 1 + (\nu_2 - \nu_1)$ for all $0 \leqs \nu_1 \leqs \nu_2 < 1$.

Finally, we will now bound $\den_{\geqs k'}(G)$. Let $W$ be any subset of $V_{G}$ of size at least $k'$. Let $W' = W \setminus V_{\cli}$. To bound $\den(G[W])$, consider the following two cases.
\begin{enumerate}
\item $|W| \leqs 10tk'/\varepsilon$. Observe that $k' \geqs 2x$, meaning that $|W'| \geqs k'/2$. Thus, we have
\begin{align*}
\den(G[W]) = \frac{|E_{G[W]}|}{|W|} 
\leqs \frac{\binom{x}{2}}{k'} + \den(G[W'])
&\overset{(\text{Lemma~\ref{lem:aux-decode}})}{<} (t - 1)c_1 + c_1(1 + 0.1\varepsilon) \\ &= tc_1(1 + 0.1\varepsilon/t) \\ &< \frac{\alpha}{2 - 1/t - \varepsilon}.
\end{align*}
\item $|W| > 10tk'/\varepsilon$. Note that, since $H$ is a bipartite graph whose vertices on one side have degree $tc_1$, the density of any subgraph of $H$ is less than $tc_1$. Hence, we have
\begin{align*}
\den(G[W]) = \frac{|E_{G[W]}|}{|W|} 
< \frac{\binom{x}{2}}{10tk'/\varepsilon} + \den(H[W'])
&< 0.1\varepsilon c_1 + tc_1 \\
&\leqs tc_1(1 + 0.1\varepsilon/t) \\
&< \frac{\alpha}{2 - 1/t - \varepsilon}.
\end{align*}
\end{enumerate}
Thus, in both cases, $\den(G[W]) < \frac{\alpha}{2 - 1/t - \varepsilon}$, which implies that $\den_{\geqs k'}(G) < \frac{\alpha}{2 - 1/t - \varepsilon}$ as desired.
\end{proof}

\section{Putting Things Together: Proof of Theorems~\ref{thm:dks} and~\ref{thm:polydks}}

With all the ingredients ready, we can now prove our main theorems (Theorems~\ref{thm:dks} and~\ref{thm:polydks}).

\begin{proof}[Proof of Theorem~\ref{thm:dks}]
For any $\eps > 0$, let $\lambda = 20^{16} / \eps^9$.
From Proposition~\ref{prop:size-relax}, Lemma~\ref{lem:dks-to-dksh}, and Lemma~\ref{lem:reduction-main}, there are FPT polynomial time reductions for the following sequence of problems.
\begin{itemize}
\item {\sc GapDkS}$(\lambda)$
\item {\sc GapDkS}$(80/\varepsilon, 20^7/\varepsilon^4)$
\item 2-uniform {\sc StrongGapD$k$SH}$(80/\varepsilon, 20^7/\varepsilon^4)$
\item  {\sc GapDAL$k$S}$(3/2 - \varepsilon)$
\end{itemize}
Thus, if there is an FPT (resp. polynomial) time $(3/2 - \eps)$-approximation algorithm for {\sc DAL$k$S}, we can solve {\sc GapDkS}$(\lambda)$ in FPT (resp. polynomial) time.
\end{proof}

The proof of Theorem~\ref{thm:polydks} is nearly the same as above, except that we now use Proposition~\ref{prop:size-relax-poly} and Lemma~\ref{lem:polydks-to-dksh} instead of Proposition~\ref{prop:size-relax} and Lemma~\ref{lem:dks-to-dksh}, respectively.

\begin{proof}[Proof of Theorem~\ref{thm:polydks}]
For any $\eps > 0$, let $t = \lceil 2/\eps\rceil$ and $\gamma = 20^7t^5/\varepsilon^4$.
From Proposition~\ref{prop:size-relax-poly}, Lemma~\ref{lem:polydks-to-dksh}, and Lemma~\ref{lem:reduction-main}, there are FPT polynomial time reductions for the following sequence of problems.
\begin{itemize}
\item {\sc PolyGapDkS}$(\frac{1}{6t}, t)$
\item {\sc PolyGapDkS}$(1/3, t, \gamma)$
\item $t$-uniform {\sc StrongGapD$k$SH}$(40t^2/\varepsilon, \gamma)$
\item  {\sc GapDAL$k$S}$(2 - \eps)$
\end{itemize}
Thus, if there is an FPT (resp. polynomial) time $(2 - \eps)$-approximation algorithm for {\sc DAL$k$S}, we can solve {\sc PolyGapDkS}$(\frac{1}{6t}, t)$ in FPT (resp. polynomial) time.
\end{proof}

\section{\W[1]-Hardness of Exact {\sc DAL{\it k}S}}

Finally, we show that the exact version of {\sc DAL$k$S} is W[1]-hard. We reduce from the $k$-Clique problem, which is well known to be \W[1]-hard~\cite{DowneyF95}. The reduction here is essentially the same as the one used in Section~\ref{sec:dksh-to-dalks}, except that we only use one copy of the incidence graph. However, we need to be more careful in our argument below as there is no gap in the starting problem.

\begin{proof}[Proof of Theorem~\ref{thm:w1}]
Given an instance $(G, k)$ of $k$-Clique, we construct an instance $(G', k', \alpha)$ of {\sc DAL$k$S} as follows. The vertex set $V_{G'}$ of $G'$ is a union of two disjoint sets $V_{\cli}$ and $V_H$ with no edges between them. $V_{\cli}$ consists of $x := k$ vertices which induces a clique. The graph $H := G'[V_H]$ is the incidence graph of $G$, i.e., a bipartite graph where one set of vertices is the set of vertices of $G$, the other is the set of edges of $G$, and there is an edge between $u \in V_G$ and $e \in E_G$ iff $u \in e$. Finally, we let $k' = x + k + \binom{k}{2}$ and $\alpha = \left(\binom{x}{2} + 2\binom{k}{2}\right)/k'$. We assume w.l.o.g. that $k \geqs 10$. Note that this implies that $\alpha > 2$.

{\bf (Completeness)} Suppose that there exists $T \subseteq V_G$ that induces a $k$-clique in $G$. Let $S' = V_{\cli} \cup T \cup E_{G[T]}$. Clearly, $|S'| = k'$ and, from how the set $S'$ is defined, we have
\begin{align*}
\den(G'[S']) = \frac{|E_{G'[S']}|}{k'} = \frac{|E_{G'[V_{\cli}]}| + |E_{G'[T \cup E_{G[T]}]}|}{k'} = \frac{\binom{x}{2} + 2\binom{k}{2}}{k'} = \alpha.
\end{align*}

{\bf (Soundness)} We will show the contrapositive. Suppose that $\den_{\geqs k'}(G') \geqs \alpha$; let $S' \subseteq V_{G'}$ be any subset of vertices such that $\den(G'[S']) = \den_{\geqs k'}(G')$.

First, we claim that $V_{\cli} \subseteq S'$. Suppose for the sake of contradiction that $|S' \cap V_{\cli}| = y < x$. Observe that $\den(G'[S'])$ is at most $$\max\{\den(G'[S' \cap V_{\cli}]), \den(G'[S' \cap V_H])\} = \max\{(y-1)/2, 2\} < \frac{x + y - 1}{2}.$$
Consider the set $S'' = S' \cup (V_{\cli} \setminus S')$. The number of vertices in $S'' \setminus S'$ is $x - y$ and the number of edges in $E_{G'[S'']} \setminus E_{G'[S']}$ is $\binom{x}{2} - \binom{y}{2}$. Hence, the ratio between the number of added edges and the number of added vertices is
\begin{align*}
\frac{\binom{x}{2} - \binom{y}{2}}{x - y} = \frac{x + y - 1}{2} > \den(G'[S']).
\end{align*}
This means that $\den(G'[S'']) > \den(G'[S'])$, which is a contradiction to the fact that $S'$ is the subset of vertices that maximizes the density of $G'[S']$ among all subsets of size at least $k'$.

Secondly, we claim that $|S'| = k'$. Suppose for the sake of contradiction that $|S'| > k'$. Since every vertex on one side (the edge side) of the bipartite graph $H$ has degree two, there must be a vertex $u' \in S' \cap V_H$ that has degree at most two (in $G'[S']$). Since $\alpha > 2$, the graph $G'[S' - \{u'\}]$ has a higher density than $G'[S']$, which once again is a contradiction to the definition of $S'$.

Now, consider the set $T' = S' \setminus V_{\cli}$. From the above two properties, we have $|T'| = k + \binom{k}{2}$ and, from $\den(G'[S']) \geqs \alpha$, we have $|E_{H[T']}| \geqs 2\binom{k}{2}$. Let $T'_{\text{vertex}} := T' \cap V_G$ and $T'_{\text{edge}} := T' \cap E_G$. We claim that $|T'_{\text{vertex}}| = k$ and that $T'_{\text{vertex}}$ induces a clique in $G$. To prove this, since $H$ is bipartite and each $e \in T'_{\text{edge}}$ has degree at most two, we can rearrange $|E_{H[T']}|$ as follows.
\begin{align*}
|E_{H[T']}| = \sum_{e \in T'_{\text{edge}}} \deg_{H[T']}(e)
&\leqs |T'_{\text{edge}}| + |\{e \in T'_{\text{edge}} \mid \deg_{H[T']}(e) = 2\}| \\
&\leqs |T'_{\text{edge}}| + \min\{|E_{G[T'_{\text{vertex}}]}|, |T'_{\text{edge}}|\}.
\end{align*}
Since $|E_{H[T']}| \geqs 2\binom{k}{2}$, we have $|T'_{\text{edge}}| \geqs \binom{k}{2}$ and hence $|T'_{\text{vertex}}| \leqs k$. Let $a = |T'_{\text{vertex}}| \leqs k$; the above inequality can be rearranged further as follows.
\begin{align*}
|E_{H[T']}| \leqs |T'_{\text{edge}}| + |E_{G[T'_{\text{vertex}}]}| \leqs \left(k + \binom{k}{2} - a\right) + \binom{a}{2} = 2\binom{k}{2} - (k - a)\left(\frac{a + k - 3}{2}\right) \leqs 2\binom{k}{2}.
\end{align*}
Since $|E_{H[T']}| \geqs 2\binom{k}{2}$, all inequalities must be equalities. Specifically, we must have $k = a$ and $|E_{G[T'_{\text{vertex}}]}| = \binom{k}{2}$. That is, $T'_{\text{vertex}}$ is a set of $k$ vertices that induce a clique in $G$ as desired.
\end{proof}

\section{Concluding Remarks}

In this work, we provide a reduction from (variants of) {\sc D$k$S} to {\sc DAL$k$S}, which establishes hardness of approximation for the latter under several assumptions. An interesting open question is whether we can strengthen our second reduction Theorem~\ref{thm:polydks}, which currently requires a rather strong assumption that {\sc PolyGapDkS} is hard. A concrete direction here is to attempt to relax the assumption to, e.g., that {\sc GapDkS} is hard.

\section*{Acknowledgments}

This work was partially supported by the National Natural Science Foundation of China (NSFC) under Grant No. 62572310.

\bibliographystyle{alpha}
\bibliography{main}

\appendix

\end{document}